	\theoremstyle{plain}
	\newtheorem{theorem}{Theorem}
	\newtheorem{proposition}[theorem]{Proposition}
	\theoremstyle{definition}
	\newtheorem{definition}{Definition}
\newcommand\Eb{\mathds{E}}
\newcommand\Pb{\mathds{P}}
\newcommand\Qb{\mathds{Q}}
\newcommand\Rb{\mathds{R}}
\newcommand\Ac{\mathscr{A}}
\newcommand\Fc{\mathscr{F}}
\newcommand\Tc{\mathscr{T}}
\newcommand\del{\partial}
\newcommand\Del{\Delta}
\newcommand\Ebt{\widetilde{\Eb}}
\newcommand\Pbt{\widetilde{\Pb}}
\newcommand\Act{\widetilde{\Ac}}
\newcommand\Wt{\widetilde{W}}
\newcommand\Zt{\widetilde{Z}}
\newcommand{\dd}{\mathrm{d}}
\newcommand{\eqlnostar}[2]{\begin{align}\label{#1}#2\end{align}}
\newcommand{\eqstar}[1]{\begin{align*}#1\end{align*}}
\begin{document}
\title{Sharing of longevity basis risk in pension schemes with income-drawdown guarantees}

\author{
Ankush Agarwal
\thanks{\textbf{e-mail}: \url{ankush.agarwal@glasgow.ac.uk}}
\qquad
Christian-Oliver Ewald
\thanks{\textbf{e-mail}: \url{christian.ewald@glasgow.ac.uk}}
\qquad 
Yongjie Wang
\thanks{\textbf{e-mail}: \url{yongjie.wang@glasgow.ac.uk}}
\\
Adam Smith Business School, University of Glasgow, G12 8QQ Glasgow, United Kingdom
}

\date{\today}

\maketitle
\begin{abstract}
This work studies a stochastic optimal control problem for a pension scheme which provides an income-drawdown policy to its members after their retirement. To manage the scheme efficiently, the manager and members agree to share the investment risk based on a pre-decided risk-sharing rule. The objective is to maximise both sides' utilities by controlling the manager's investment in risky assets and members' benefit withdrawals. We use stochastic affine class models to describe the force of mortality of the members' population and consider a longevity bond whose coupon payment is linked to a survival index. In our framework, we also investigate the longevity basis risk, which arises when the members' and the longevity bond's reference populations show different mortality behaviours. By applying the dynamic programming principle to solve the corresponding HJB equations, we derive optimal solutions for the single- and sub-population cases. Our numerical results show that by sharing the risk, both manager and members increase their utility. Moreover, even in the presence of longevity basis risk, we demonstrate that the longevity bond acts as an effective hedging instrument.

{\bf Keywords:} Pension scheme, longevity basis risk, mortality-linked instrument, stochastic control, dynamic programming principle

\end{abstract}

\section{Introduction }
Pension schemes face a wide range of risks such as investment risk, interest rate risk, inflation risk and longevity risk. Among all the risks, the longevity risk is becoming a non-negligible challenge to pension schemes. The longevity risk is the risk that the actual life expectancy may be higher than anticipated. It is a positive trend for society that people's average life expectancy is increasing over the last decades. However, the pension schemes are suffering from the loss caused by the unexpected increasing benefit outgo accompanied by the longevity trend. \cite{cocco2012longevity} shows that the average life expectancy of 65-year-old US (UK) males increases by 1.2 (1.5) years per decade. As a consequence, a defined benefit scheme (DB scheme) for those populations would have needed 29\% more wealth in 2007 than in 1970. 

Instead of a classical DB or DC scheme, this work considers a pension scheme which provides an income-drawdown option to its members in the decumulation phase. The decumulation phase refers to the period after the retirement of pension scheme members. The drawdown option gives the members the right to withdraw money periodically from the scheme until death time, while keeps the remaining amount in the pension scheme. The management of the scheme starts from the retirement time of the members and ends until the last member passes away. Thus, the investment period can last for decades. Due to the long investment horizon, the scheme may face significant interest rate risk and inflation risk. However, the main focus of this work is on hedging the longevity risk. We assume the risk-free interest rate to be constant and do not consider the inflation risk. In other words, we only consider the investment risk and longevity risk. The scheme manager invests in a representative stock to gain investment returns, and he is exposed to the investment risk. As the members tend to live longer, the number of surviving members each year is more than expected, resulting in an increase of benefit withdrawals. Also, the length of decumulation period is increasing due to the members' growing life expectancy. One way to reduce the pension scheme's unexpected loss caused by its exposure to the longevity risk is to gain a better understanding of the future mortality evolution. Another way is to seek some financial instruments to transfer the longevity risk to the financial market.

The force of mortality, which represents the instantaneous rate of mortality at a certain time, is useful when studying mortality behaviour. In a continuous-time framework, \cite{luciano2005non} modelled the force of mortality by affine processes and calibrated the models to the observed and projected UK mortality tables. They claimed that the affine process with a deterministic part that increases exponentially could describe the force of mortality properly. This paper applies Ornstein-Uhlenbeck (OU) and Cox-Ingersoll-Ross (CIR) process to model the force of mortality of certain populations. To hedge the longevity risk using financial instruments, \cite{blake2001survivor} introduced a survivor bond which provides coupon payment based on the number of survivors in a chosen reference population. Under a stochastic force of mortality framework, \cite{menoncin2008role} studied the optimal consumption and investment strategy for an individual investor using the longevity bond to hedge the investor's longevity risk. He argued that the longevity bond plays a crucial role in the individual's longevity risk management.  However, the longevity basis risk arises when the force of mortality of the longevity bond reference population correlates imperfectly with the pension members' force of mortality. To deal with the longevity basis risk, \cite{wong2014time} applied the cointegration technique to study the mean-variance hedging of longevity risk for an insurance company with a longevity bond. They suggested that cointegration is vital in longevity risk management. In this paper, we introduce a rolling zero-coupon longevity bond to hedge the scheme's longevity exposure.  We show that the longevity bond provides an efficient way to hedge the longevity risk both in the cases with and without longevity basis risk.

Under a continuous-time framework, this work aims to determine the optimal investment and benefit withdrawal strategy for a pension scheme that provides an income-drawdown policy in the decumulation phase. After retirement, the members continuously withdraw money from the pension scheme until death. Upon death, a deterministic proportion of the passed away members' pension account balance is given to the manager as compensation. The proposed optimal control problem considers both benefit of the scheme members and profit of the manager by incorporating a risk-sharing rule parameter. By applying the dynamic programming principle, we first solve the optimisation problem under a general case. Then, we drive explicit solutions for single- and sub-population cases. Numerical studies on single-population and sub-population OU models are conducted to investigate the optimal portfolio strategy and optimal benefit withdrawal rate dynamically. Comparison studies are provided to analyse the performance of longevity risk hedge. We also implement sensitivity analyses to look into the impact of the market price of risk and risk-sharing rule parameter. Our results show that with the absence of longevity basis risk, the longevity bond provides efficient longevity risk hedge when the members show a longevity trend. When the members are a sub-population of the longevity bond reference population, the presence of the longevity basis risk may weaken the longevity bond's hedging performance. However, it still provides a way to hedge the longevity risk and offers a risk premium. We also establish that an equal-risk-sharing rule is beneficial to both the members and the manager in the long run. This paper contributes to the literature by studying a stochastic optimal control problem for pension schemes in the presence of longevity risk and longevity basis risk. Our optimal control problem also incorporates a risk-sharing rule parameter which decides the agreement on how to share the risk between the members and the manager such that both parties benefit in the long run.

The rest of this paper is organised as follows. Section \ref{sec:model} introduces the mathematical framework of the problem and derives the optimal solution in the general case. Explicit solutions under single-population and sub-population models are given in Section \ref{sec:solution}. In Section \ref{sec:numerical}, numerical simulations are carried out to investigate the optimal portfolio strategy and benefit withdrawal. Comparison and sensitivity analyses are also conducted to discuss the role of longevity bond and impact of model parameters in the optimal solutions. Section \ref{sec:conclusion} concludes the paper.

\section{Mathematical Framework}
\label{sec:model}
The literature on optimal control problems for pension schemes with deterministic mortality behaviour of the populations is rich. However, given the fluctuations in the mortality behaviour over time, it is more practical to use stochastic mortality rates. In this work, we use affine class models for the force of mortalities of members' population and the reference population for longevity bond. When, the two populations are different, we use $n=2$ populations in our framework and when they are the same, we use $n=1$ population.

\subsection{The stochastic force of mortality}
We consider an infinite time horizon $\Tc=[0,\infty)$ where time 0 represents the retirement time of all the populations. Let $\{{W(t)\mid t\in \Tc\}}=\{(W_1 (t), ..., W_n (t),W_S (t))^\prime \mid t\in \Tc\}$ denote an $(n+1)$-dimensional standard Brownian motion on a complete filtered probability space $(\Omega, \Fc, \{\Fc(t)\}_{t\geq0}, \Pb)$. Here, $n$ denotes the number of populations and $\Pb$ denotes the physical measure where we observe the longevity behaviours of the populations, and the financial market.

For $i \leq n,$ let $p_i(t)$ and $\lambda_i(t)$ denote the survival probability and the force of mortality, respectively, of the $i$th population at time $t$. The two are related through the following relation:
\eqstar{
\frac{\dd p_i(t)}{p_i(t)}= -\lambda_i(t)\dd t,& &p_i(0) = 1.
}
For any $s \geq t \in \Tc$, $\frac{p_i(s)}{p_i(t)}=e^{-\int_t^s\lambda_i(u)\dd u}$ is the survival probability of the $i$th population between time $t$ and $s$. For notational simplicity, we denote by $\{\lambda(t)\mid t\in \Tc\}=\{(\lambda_1(t),...,\lambda_n(t))^\prime\mid t\in\Tc\}$, the vector of force of mortalities and assume that it evolves as
\eqlnostar{eq:lambdas}{
\dd \lambda(t)= \mathscr{A} (t,\lambda)\dd t+\Sigma^\prime (t,\lambda)\dd W(t),
}
where 
\eqstar{
\mathscr{A}(t,\lambda)=\left[\begin{array}{ccc}
    a_1(t,\lambda)\\
    \vdots  \\
    a_{n}(t,\lambda)
    \end{array} \right], &&
    \Sigma^\prime (t,\lambda)= \left[\begin{array}{ccccc}
    \sigma_1(t,\lambda_1) &\cdots & \sigma_{1,n}(t,\lambda_{n})&0  \\
    \vdots &\ddots &\vdots &\vdots \\
    \sigma_{n,1}(t,\lambda_1) & \cdots & \sigma_{n,n}(t,\lambda_{n}) & 0 
    \end{array} \right].
}
For any $i,j= 1,\dots,n$, $a_i(t,\lambda)$ and $\sigma_{ij}(t,\lambda_j)$ are assumed to be continuous functions. In the following, we will use different affine class models for $\lambda(t).$
\subsection{The financial market} 
We consider a frictionless financial market consisting of a stock and a \textit{rolling zero-coupon longevity bond}. The money market account is denoted by $R(t)$,
\eqstar{
\frac{\dd R(t)}{R(t)} = r\dd t, & & R(0)=1, 
}
where $r$ denotes the constant risk-free interest rate. In this work we consider a constant risk-free rate of interest as our focus is to understand the impact of longevity and investment risk in a pensions scheme. Our analysis can also be performed in the presence of a stochastic interest rate.

Under a risk-neutral probability measure, the price of a financial derivative is the discounted expected value of its future payoff. We thus introduce an equivalent risk-neutral probability measure $\Qb$ by the following Radon-Nikodym derivative
\eqstar{
\frac{\dd \Qb}{\dd \Pb} = Z(T)=\exp\left(-\int_0^T\theta^\prime(t,\lambda)\dd W(t)-\frac{1}{2}\int_0^T\|\theta(t,\lambda)\|^2\dd t\right),
}
where $\{\theta(t,\lambda)\mid t\in \Tc\}=\{(\theta_1(t,\lambda_1),\cdots,\theta_n(t,\lambda_n),\theta_S)^\prime\mid t\in \Tc\}$ is $\Rb^n$-valued, $\Fc$-adapted process such that $Z(t)$ is a martingale and $\Eb[Z]=1$. $\theta(t,\lambda) $ is called the vector of market prices of risks and measures the additional amount of investment return when risk increases by one unit. By Girsanov's theorem, $\{W^\Qb(t)\mid t\in \Tc\}$ is an $(n+1)$-dimensional standard Brownian motion under $\Qb$ such that
\eqlnostar{Girsanov}{
W^\Qb(t)=W(t)+\int_0^t\theta(s,\lambda)\dd s.
}
The stock price process $\{S(t)\mid t \in \Tc\}$ is given as
\eqstar{
\frac{\dd S(t)}{S(t)} = r\dd t+\sigma_S\left[\theta_S\dd t+\dd W_S(t)\right],& &S(0) = S_0,
}
where $\sigma_s$ denotes the constant stock price volatility. The risk premium of the stock is $\theta_S\sigma_S$. 

In the literature, several types of mortality-linked securities are proposed to hedge the longevity risk. The values of these securities depend on the mortality index for some given populations: higher the survival rate, more valuable the securities. The definition below provides a short description of one such security, a \textit{zero-coupon longevity bond}.
\begin{definition}
A zero-coupon longevity bond is a contract paying a face amount which is equal to the survival probability of the reference population from time 0 until a fixed maturity time $T$.
\end{definition}
There may be multiple longevity bonds based on different reference populations in the market. However, as an illustration of the use of longevity bond, we only consider one longevity bond in this work. Let the $i$th population be the reference population of the zero-coupon longevity bond which pays $\frac{p_i(T)}{p_i(0)}$ at maturity. The arbitrage-free price of the longevity bond at time $t$ is given as
\eqstar{
L(t,T) = \Eb^\Qb_t\left[\frac{R(T)}{R(t)}\frac{p_i(T)}{p_i(0)}\right] =\Eb^\Qb_t\left[e^{-r(T-t)-\int_0^T\lambda_i(u)\dd u}\right] =e^{-\int_0^t\lambda_i(u)\dd u-r(T-t)}\Eb_t^\Qb\left[e^{-\int_t^T\lambda_i(u)\dd u}\right].
}
Applying It\^{o}'s formula and using \eqref{Girsanov} gives
\eqstar{
\frac{\dd L(t,T)}{L(t,T)} = r\dd t+\sum_{i,j=1}^{n}\sigma_L^{ij}(t,T) \Big[\theta_j(t,\lambda_j)\dd t+\dd W_j(t)\Big],
}
where $\sigma_L^{ij}(t,T)=\frac{\del L(t,T)}{\del\lambda_j(t)}\frac{\sigma_{ij}(t,\lambda_j)}{L(t,T)}$.

For any $t \in [0,T]$, the time to maturity of the zero-coupon longevity bond is $T-t$. In practice, it is impossible for an investor to find all zero-coupon longevity bonds in the market with any time to maturity $T-t$. By taking inspiration from the arguments in \cite{boulier2001optimal} on rolling zero-coupon bonds, we introduce a \textit{rolling zero-coupon longevity bond} $L(t)$ (with a little abuse of notation) with constant time to maturity $T$. The use of a rolling zero-coupon longevity bond in our set-up simplifies the calculations. The dynamics of $L(t)$ under $\Pb$ is given as
\eqstar{
\frac{\dd L(t)}{L(t)} = r\dd t+\sum_{i,j=1}^{n}\sigma_L^{ij}(t)\left[\theta_{j}(t,\lambda_j)+\dd W_j(t)\right].
}
From the above, we can see that $L(t)$ provides a longevity risk premium of $\sum_{i,j=1}^{n}\sigma_L^{ij}(t)\theta_{j}(t,\lambda_j)$. Any zero-coupon longevity bond $L(t,T)$ can be replicated by using the rolling longevity bond $L(t)$ and cash. The following equation shows the relationship between $L(t,T)$ and $L(t)$
\eqstar{
\frac{\dd L(t,s)}{L(t,s)} = \left(1-\frac{\sum_{i,j=1}^{n}\sigma_L^{ij}(t,s)}{\sum_{i,j=1}^{n}\sigma_L^{ij}(t)}\right)\frac{\dd R(t)}{R(t)}+\frac{\sum_{i,j=1}^{n-1}\sigma_L^{ij}(t,s)}{\sum_{i,j=1}^{n}\sigma_L^{ij}(t)}\frac{\dd L(t)}{L(t)}.
}

\subsection{The optimisation problem}

We study the optimal benefit withdrawal rate and investment strategy in the decumulation phase for a pension scheme that provides an income-drawdown option. That is, the members are allowed to withdraw money periodically from the scheme until death. Upon death, $\pi\in [0,1]$ fraction of the passed away members' pension balance is delivered to the scheme manager as compensation. While the remaining $1-\pi$ fraction of the balance stays in the scheme's fund pool. Let $\alpha_S(t)$, $\alpha_L(t)$ and $\alpha_0(t)$ denote the investments in stock, rolling longevity bond and money market, respectively. In addition, let $\beta(t)$ denote the amount of benefit withdrawal and $Y(t)$ the scheme's wealth. We assume that the force of mortality of pension members is described by $\lambda_j(t)$. In other words, population $j$ represents the pension members.

To study the dynamics of the scheme's wealth, we employ a similar method used in \cite{he2013optimalb} by first looking at the discrete-time changes in scheme's wealth. For any $t\in \Tc$ and a small positive number $\Del$, the returns on stock, longevity bond and money market account are given by
\eqstar{
&\frac{S(t+\Del)-S(t)}{S(t)}\alpha_S(t),& \frac{L(t+\Del)-L(t)}{L(t)}\alpha_L(t), & & \frac{R(t+\Del)-R(t)}{R(t)}\alpha_0(t).
}
Let $\mu(t,t+\Del)$ denote the rate of investment return. Then, we have
\eqstar{
\mu(t,t+\Del)Y(t) =& \frac{S(t+\Del)-S(t)}{S(t)}\alpha_S(t)+\frac{L(t+\Del)-L(t)}{L(t)}\alpha_L(t)+\frac{R(t+\Del)-R(t)}{R(t)}\alpha_0(t).
}
Let $q(t,t+\Del)$ denote the proportion of the members who pass away in time interval $(t,t+\Del).$ The total amount of passed away members' pension equals to $q(t,t+\Del)Y(t)$. During this time period, the scheme manager receives $\pi q(t,t+\Del)Y(t)$ as compensation. In the meantime, the members withdraw $\beta(t)\Del$ as benefits. Thus, the scheme's cashflow in this period is given by
\eqstar{
\mu(t,t+\Del)Y(t)-\pi q(t,t+\Del) Y(t)-\beta(t)\Del.
}
At time $t+\Del$, $1-q(t,t+\Del)$ fraction of the members survive, and the scheme wealth is equally distributed into each surviving member's pension account. Therefore, we have
\eqstar{
Y(t+\Del)= &\Big(Y(t)+\mu(t,t+\Del) Y(t)-q(t,t+\Del)\pi Y(t)-\beta(t)\Del\Big)\frac{1}{1-q(t,t+\Del)}.
}
Since $\lambda_j(t)$ is the force of mortality of members, we have $1-q(t,t+\Del) = e^{-\int_t^{t+\Del}\lambda_j(u)\dd u}$. The Taylor series approximation gives
\eqstar{
\frac{1}{1-q(t,t+\Del) } =e^{\int_t^{t+\Del}\lambda_j(u)\dd u} =1+\lambda_j(t)\Del+o(\Del).
}
Thereafter, we have
\eqstar{
Y(t+\Del)= &\Big(Y(t)+\mu(t,t+\Del) Y(t)-q(t,t+\Del)\pi Y(t)-\beta(t)\Del\Big)\Big(1+\lambda_j(t)\Del+o(\Del)\Big).
}
Since
\eqstar{
\mu(t,t+\Del)\lambda_j(t)\Del = o(\Del), & & q(t,t+\Del)\lambda_j(t)\Del=o(\Del),
}
we get
\eqstar{
Y(t+\Del)= Y(t)+\mu(t,t+\Del) Y(t)-q(t,t+\Del)\pi Y(t)-\beta(t)\Del+\lambda_j(t)\Del Y(t)+o(\Del).}
Furthermore, 
\eqstar{
\frac{Y(t+\Del)-Y(t)}{\Del} = \frac{\mu(t,t+\Del) }{\Del}Y(t)-\pi Y(t)\frac{q(t,t+\Del)}{\Del}-\beta(t)+\lambda_j(t)Y(t)+o(\Del).
}
The following equations hold:
\eqstar{
\lim\limits_{\Del \to 0}\frac{q(t,t+\Del)}{\Del}=&\lambda_j(t),\\ \lim\limits_{\Del \to 0}\mu(t,t+\Del)Y(t)=&\frac{\dd S(t)}{ S(t)}\alpha_S(t)+\frac{\dd L(t)}{L(t)}\alpha_L(t)+\frac{\dd R(t)}{R(t)}\alpha_0(t).
}
Thus, taking $\Del \to 0$ we obtain the continuous-time version of the scheme's wealth dynamics:
\eqstar{
\dd Y(t)=&\Big[ rY(t)+(1-\pi)\lambda_j(t)Y(t)+\alpha_S(t)\sigma_S\theta_S+\alpha_L(t)\sum_{i,j=1}^{n-1}\sigma_L^{ij}(t)\theta_j(t)-\beta(t)\Big]\dd t\\
&+\alpha_S(t)\sigma_S\dd W_S(t)+\alpha_L(t)\sum_{i,j=1}^{n-1}\sigma_L^{ij}(t)\dd W_j(t).
}
Here, we use the fact that $\alpha_0(t)=Y(t)-\alpha_S(t)-\alpha_L(t)$. It is worth noticing that the wealth process here is not a self-financing wealth process since there are continuous benefits withdrawal from the scheme. Also, at any time $t\in \Tc$, the manager receives $\pi \lambda_j(t)Y(t)\dd t$ amount of money out from the scheme as compensation. Meanwhile, the scheme continuously receives $(1-\pi)\lambda_j(t)Y(t)$ from the passed away members which is called the mortality credit in \cite{he2013optimala}.

At any time, the pension scheme manager decides the benefits withdrawal rate and the investment strategy. The manager works not only for his own benefit but also for the benefit of scheme members. This is also known as first-best principal-agent problem where the agent is paid a fraction of the scheme's wealth at the stochastic death time. More specifically, we consider an optimisation problem which combines the manager's and the members' utilities. Denote by $\tau$ the stochastic death time of the pension members, the objective function is given as
\eqstar{
J(Y,\lambda ; \alpha_S,\alpha_L, \beta)&= \Eb\Big[\int_0^\infty e^{-rs} U_P(\beta(s)) \mathbb{1}_{\{\tau\ge s\}}\dd s\Big]+\phi\Eb\Big[e^{-r\tau}U_A(\pi Y(\tau))\Big]\\
&=\Eb\Big[\int_0^\infty e^{-rs}p_j(s) \Big(U_P(\beta(s))+\phi \lambda_j(s)U_A(\pi Y(s))\Big)\dd s \Big]
}
where $U_P(\cdot)$ and $U_A(\cdot)$ denote the utility functions of the principal (members) and the agent (manager). The non-negative constant $\phi$ can be viewed as a parameter that determines the risk-sharing rule between the principal and the agent. The case $\phi=0$ corresponds to the situation when the manager works for only the sake of members. In this case, the objective is to maximise the members' running utility from benefit withdrawals while the manager pays no attention to his own utility. The case $0<\phi<1$ prioritizes the members' utility. When $\phi=1$, the objective function puts equal importance on members' and manager's utility.

To specify the optimisation problem, we set $U_P(\cdot)$ and $U_A(\cdot)$ as log utility functions:
\eqstar{
&U_A(x)=\ln x,&U_P(x)=\ln x,& &\forall x>0.
}
Besides, upon death, we assume the total amount of passed away members' remaining pension is paid to the manager (i.e. $\pi=1$). The wealth process is now
\eqlnostar{eq:wealth}{
\dd Y(t)=&\Big[ rY(t)+\alpha_S(t)\sigma_S\theta_S+\alpha_L(t)\sum_{i,j=1}^{n-1}\sigma_L^{ij}(t)\theta_j(t)-\beta(t)\Big]\dd t\nonumber\\
&+\alpha_S(t)\sigma_S\dd W_S(t)+\alpha_L(t)\sum_{i,j=1}^{n-1}\sigma_L^{ij}(t)\dd W_j(t).
}
It should be noticed that, at any time $t\in\Tc$, the manager receives $c(t)=\lambda_i(t)Y(t)$ as his compensation.

The optimisation problem is thus defined as
\eqlnostar{eq:problem}{
\left\{\begin{array}{cc}
     &\underset{\alpha_S,\alpha_L, \beta}{\sup} \, \Eb\Big[\int_0^\infty e^{-rs}p_j(s) \Big(\ln(\beta(s))+\phi \lambda_j(s)\ln( Y(s))\Big)\dd s \Big] \\
     &\text{s.t.\quad~ \eqref{eq:lambdas} and \eqref{eq:wealth} hold}.
\end{array}\right.
}
This optimal control problem can be solved by applying the dynamic programming principle, and the solution is given in the following proposition.
\begin{proposition}
The solution to optimisation problem \eqref{eq:problem} is
\eqstar{
&\frac{\beta^*(t)}{Y(t)}=\frac{1}{G(t,\lambda)}, \qquad \frac{\alpha_S^*(t)}{Y(t)}=\frac{\theta_S}{\sigma_S},\\
&\frac{\alpha_L^*(t)}{Y(t)} =\frac{\sum_{i,j=1}^{n-1}\sigma_L^{ij}(t)\theta_j (t,\lambda_j)}{\sum_{j=1}^{n-1}\left(\sum_{i=1}^{n-1}\sigma_L^{ij}(t)\right)^2 }+\frac{\sum_{i,j=1}^{n-1}\sigma_L^{ij}(t)\Sigma^\prime G_{\lambda}(t,\lambda)}{\sum_{j=1}^{n-1}\left(\sum_{i=1}^{n-1}\sigma_L^{ij}(t)\right)^2 G(t,\lambda)},
}
where
\eqstar{
G(t,\lambda) = \Eb_t\left[\int_t^\infty (\phi\lambda_j(s)+1)e^{-\int_t^s(r+\lambda_j(u))\dd u}\dd s\right].
}
\label{prop:single}
\end{proposition}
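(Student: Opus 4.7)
The plan is to solve \eqref{eq:problem} by the dynamic programming principle. Since the weight $p_j(s)=\exp(-\int_0^s\lambda_j(u)\dd u)$ can be absorbed into the discount, I define the value function
$$V(t,y,\lambda)=\sup_{(\alpha_S,\alpha_L,\beta)}\Eb_t\!\left[\int_t^\infty e^{-r(s-t)}\frac{p_j(s)}{p_j(t)}\Big(\ln\beta(s)+\phi\lambda_j(s)\ln Y(s)\Big)\dd s\right],$$
so that the effective rate of discount becomes $r+\lambda_j$. Standard manipulations then yield the HJB equation
$$0=\sup_{\alpha_S,\alpha_L,\beta}\Big\{\ln\beta+\phi\lambda_j\ln y-(r+\lambda_j)V+V_t+\mathcal{L}^{\alpha_S,\alpha_L,\beta}V\Big\},$$
where $\mathcal{L}^{\alpha_S,\alpha_L,\beta}$ is the infinitesimal generator of the controlled pair $(Y,\lambda)$ built from \eqref{eq:lambdas} and \eqref{eq:wealth}.

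Motivated by the logarithmic form of both utilities, I would try the separable ansatz $V(t,y,\lambda)=G(t,\lambda)\ln y+H(t,\lambda)$; differentiation gives $V_y=G/y$, $V_{yy}=-G/y^2$, and $V_{y\lambda_k}=G_{\lambda_k}/y$. The first-order condition in $\beta$ reads $1/\beta^{*}=V_y$, so $\beta^{*}/Y=1/G$. Because $W_S$ is independent of the Brownian motions driving $\lambda$, the FOC in $\alpha_S$ contains no mixed-derivative term and reduces to $V_y\sigma_S\theta_S+V_{yy}\alpha_S\sigma_S^2=0$, yielding $\alpha_S^{*}/Y=\theta_S/\sigma_S$. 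The FOC in $\alpha_L$ is the only one that couples $V_y$, $V_{yy}$ and all the mixed derivatives $V_{y\lambda_k}$; solving the resulting scalar linear equation produces precisely the advertised decomposition of $\alpha_L^{*}/Y$ into a myopic term $\big(\sum\sigma_L^{ij}\theta_j\big)/\sum_j(\sum_i\sigma_L^{ij})^2$ and an intertemporal-hedging term proportional to $\Sigma'G_\lambda/G$.

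Substituting the optimal controls back into the HJB and collecting like powers of $\ln y$, one observes that every contribution coming from the optimized drift/diffusion of $Y$ is constant in $y$ (the $y$-factors cancel against $V_y=G/y$ and $V_{yy}=-G/y^2$); the coefficient of $\ln y$ therefore collapses to
$$G_t+\mathcal{L}^\lambda G-(r+\lambda_j)G+(1+\phi\lambda_j)=0,$$
where $\mathcal{L}^\lambda$ denotes the generator of $\lambda$ read off from \eqref{eq:lambdas}. The constant-in-$y$ part gives a companion linear PDE for $H$, but neither $\beta^{*}$ nor $\alpha_L^{*}$ depends on $H$, so it can be left implicit. A Feynman--Kac representation applied to the PDE for $G$ immediately delivers the claimed formula
$$G(t,\lambda)=\Eb_t\!\left[\int_t^\infty(\phi\lambda_j(s)+1)e^{-\int_t^s(r+\lambda_j(u))\dd u}\dd s\right].$$

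The FOC computations are tedious but routine; the delicate step is the infinite-horizon verification theorem. Three items need attention: (i) finiteness of the integral defining $G$, which follows from the exponential-affine moment bounds available for the OU/CIR specifications of $\lambda$; (ii) admissibility of the feedback controls, in particular positivity of $Y^{*}$, which is automatic because under the candidate laws the wealth dynamics become linear in $Y$; and (iii) the transversality condition $\lim_{T\to\infty}\Eb\big[e^{-rT}p_j(T)V(T,Y^{*}(T),\lambda(T))\big]=0$, which should follow from log-normal-type estimates for $Y^{*}$ combined with affine moment bounds on $\lambda$. This last point is where I would concentrate most of the care; everything else is straightforward bookkeeping.
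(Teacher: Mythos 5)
Your proposal follows essentially the same route as the paper's own proof: dynamic programming to an HJB equation, first-order conditions in $\beta$, $\alpha_S$, $\alpha_L$, the separable ansatz $V=G\ln y+H$, separation of the $\ln y$ terms into a linear PDE for $G$, and a Feynman--Kac representation giving the stated formula. The only substantive difference is that you explicitly flag the infinite-horizon verification issues (integrability of $G$, admissibility, transversality), which the paper's heuristic derivation leaves unaddressed; this is a point in your favour rather than a gap.
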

\begin{proof}
The proof is given in Appendix \ref{ap:single}.
\end{proof}
\section{Explicit solutions}
\label{sec:solution}
This section gives the explicit solutions to the optimal control problem proposed before in single-population and sub-population cases. In the literature, e.g. \cite{luciano2005non} and \cite{wong2014time}, several continuous-time stochastic models for force of mortality are studied. For example, OU process,  CIR process and Feller process. In this section, we use OU and CIR processes to model the stochastic force of mortality.
\subsection{Single-population model}
\label{sec:single}
Let $n=1$, that is, we assume that the reference population of the longevity bond happens to be the scheme members' population. We expect that the investment in the longevity bond can hedge the scheme's longevity risk effectively, since the uncertainty in the longevity bond value correlates the members' longevity risk perfectly. Now, \eqref{eq:lambdas} takes the following form:
\eqstar{
\dd \lambda_1(t)=\left(a_1(t)-b_1\lambda_1(t)\right) \dd t+\sigma_1(t,\lambda_1)\dd W_1 (t),
}
where $b_1$ is a constant. According to \cite{menoncin2017longevity}, $a_1(t)$ is defined as
\eqstar{
a_1(t) = b_1\left(\nu_1+\frac{1}{\Del_1}\left( 1+\frac{1}{b_1 \Del_1}\right)e^{\frac{t-m_1}{\Del_1}}\right),
}
where $m_1$, $\nu_1$ and $\Del_1$ are constants. Particularly, $m_1$ equals the modal value of life expectancy of the members. The initial value of the force of mortality is calculated according to the Gompertz-Makeham function
\eqstar{
\lambda_1(0) = \nu_1+\frac{1}{\Del_1}e^{-\frac{m_1}{\Del_1}}.
}

The continuous function $\sigma_1(t,\lambda_1)$ takes different forms for different processes. For instance: 
\begin{itemize}
\item{ OU process: $\sigma_1(t,\lambda_1)=\sigma_1$, and }
\eqlnostar{eq:singleOU}{
\dd \lambda_1(t)=\left(a_1(t)-b_1\lambda_1(t)\right) \dd t+\sigma_1\dd W_1 (t).
}
\item{ CIR process: $\sigma_1(t,\lambda_1)=\sigma_1\sqrt{\lambda_1(t)}$, and }
\eqlnostar{eq:singleCIR}{
\dd \lambda_1(t)=\left(a_1(t)-b_1\lambda_1(t)\right) \dd t+\sigma_1\sqrt{\lambda_1}\dd W_1 (t).
}
\end{itemize}

In what follows, we apply dynamic programming principle to solve the single-population optimisation problem. Proposition \ref{prop:singleOU} and \ref{prop:singleCIR} give the explicit solutions under OU process and CIR process settings, respectively.
\begin{proposition}
Suppose that $\lambda_1(t)$ follows the OU process \eqref{eq:singleOU}. Then, we have
\eqstar{
\Eb_t\left[e^{-\int_t^s\lambda_1(u)du}\right]=e^{A_0(t,s)-A_1(t,s)\lambda_1(t)},
}
where 
\eqstar{
&A_1(t,s) = \frac{1-e^{-b_1(s-t)}}{b_1},
& &A_0(t,s) = -\int_t^s\left(a_1(u)A_1(u,s)-\frac{1}{2}\sigma_1^2 A_1^2(u,s) \right)\dd u.
}
Suppose $\theta_1(t,\lambda_1):=\theta_1.$ Then, the solution to the single-population optimisation problem \eqref{eq:problem} is given as
\eqstar{
\frac{\beta^*(t)}{Y(t)}= \frac{1}{G(t,\lambda_1)}, && \frac{\alpha_S^*(t)}{Y(t)}= \frac{\theta_S}{\sigma_S},&& \frac{\alpha_L^*(t)}{Y(t)} = \frac{\theta_1}{\sigma_L(t)}+\frac{\sigma_1}{\sigma_L(t)}\frac{G_{\lambda_1}(t,\lambda_1)}{G(t,\lambda_1)},
}
where
\eqstar{
\sigma_L (t)= &-A_1(t,t+T)\sigma_1,\\
G(t,\lambda_1) = &\ \phi \lambda_1(t) \int_t^\infty e^{-(b_1+r)(s-t)+A_0(t,s)-A_1(t,s)\lambda_1(t)}\dd s\\
&+\phi\int_t^\infty\int_t^s \left[a_1(u)-\sigma_1^2A_1(u,s)\right]\dd u \ e^{-r(s-t)+A_0(t,s)-A_1(t,s)\lambda_1(t)}\dd s\\
&+\int_t^\infty e^{-r(s-t)+A_0(t,s)-A_1(t,s)\lambda_1(t)} \dd s.
}
\label{prop:singleOU}
\end{proposition}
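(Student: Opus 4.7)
The plan has three stages: (i) derive the exponential-affine form of the survival expectation under the OU dynamics and obtain $A_0,A_1$; (ii) specialise the general formula for $G$ in Proposition~\ref{prop:single} to this setting; and (iii) compute the volatility $\sig_L(t)$ of the rolling zero-coupon longevity bond and assemble the three optimal controls.

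First I would apply Feynman-Kac to $u(t,x)=\Eb[e^{-\int_t^s\lam_1(v)\dd v}\mid \lam_1(t)=x]$, obtaining
\[\del_t u+(a_1(t)-b_1 x)\del_x u+\tfrac12\sig_1^2\del_{xx} u-x\,u=0,\qquad u(s,x)=1.\]
Since the generator is affine and the running discount is linear in $x$, the natural ansatz is $u(t,x)=\exp\{A_0(t,s)-A_1(t,s)x\}$. Substituting and matching coefficients of $x^0$ and $x^1$ reduces the PDE to the decoupled system
\[\del_t A_1=b_1 A_1-1,\qquad \del_t A_0=a_1(t)A_1(t,s)-\tfrac12\sig_1^2 A_1(t,s)^2,\]
with terminal conditions $A_0(s,s)=A_1(s,s)=0$. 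Integrating the first ODE backwards gives $A_1(t,s)=(1-e^{-b_1(s-t)})/b_1$, and integrating the second then yields the stated integral representation of $A_0$.

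Second, to extract $G$ explicitly I would split
\[G(t,\lam_1)=\phi\int_t^\infty e^{-r(s-t)}\Eb_t\!\left[\lam_1(s)e^{-\int_t^s\lam_1(v)\dd v}\right]\dd s+\int_t^\infty e^{-r(s-t)}\Eb_t\!\left[e^{-\int_t^s\lam_1(v)\dd v}\right]\dd s,\]
after swapping expectation and the $s$-integral via Fubini. The second summand is immediately the third term of the claimed $G$ using step one. For the first summand, I would compute $H(t,x):=\Eb_t[\lam_1(s)e^{-\int_t^s\lam_1(v)\dd v}\mid \lam_1(t)=x]$ by Feynman-Kac with terminal condition $H(s,x)=x$, using the affine ansatz $H(t,x)=(B_0(t,s)+B_1(t,s)x)\,u(t,x)$. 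Cancelling the $u$-PDE produces the linear ODEs $\del_t B_1=b_1 B_1$ and $\del_t B_0=-B_1(t,s)[a_1(t)-\sig_1^2 A_1(t,s)]$ with $B_1(s,s)=1$, $B_0(s,s)=0$, whose explicit solutions, when substituted back, deliver the first two summands of the stated $G$.

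Third, the rolling-bond volatility follows by applying It\^o's formula to the zero-coupon expression $L(t,T)=\exp\{-\int_0^t\lam_1(v)\dd v-r(T-t)+A_0(t,t+T)-A_1(t,t+T)\lam_1(t)\}$; only the $-A_1\lam_1$ term contributes to the $\dd W_1$ coefficient, giving $\sig_L(t)=-A_1(t,t+T)\sig_1$. Substituting $n=1$, $\theta_1(t,\lam_1)=\theta_1$, this $\sig_L$, and the computed $G(t,\lam_1)$ into the general formulas of Proposition~\ref{prop:single} yields the explicit expressions for $\beta^*$, $\alpha_S^*$ and $\alpha_L^*$. The main obstacle is the derivation of $G$: one must set up and solve the affine ansatz for $H$ correctly and justify the Fubini exchange, which is nontrivial because the Gompertz-Makeham drift $a_1(t)$ grows exponentially; controlling the joint decay of $e^{-r(s-t)+A_0(t,s)-A_1(t,s)\lam_1(t)}$ against this growth is what secures integrability on the infinite horizon and validates the candidate through a standard verification argument.
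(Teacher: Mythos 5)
Your proposal is correct, and its first and third stages (the Feynman--Kac/affine ansatz for $\Eb_t[e^{-\int_t^s\lambda_1\dd u}]$ yielding the ODEs for $A_0,A_1$, and the It\^o computation of $\sigma_L$) coincide with what the paper does. Where you genuinely diverge is in the central step, the evaluation of $\Eb_t\big[\lambda_1(s)e^{-\int_t^s\lambda_1(v)\dd v}\big]$: the paper factors this as $\Ebt_t[\lambda_1(s)]\,h(t,s)$ by introducing the exponential martingale $\Zt(t)$ with $\dd\Zt=-\sigma_1A_1(t,s)\Zt\,\dd W_1$, changing measure to $\Pbt$ via Girsanov, and solving the linear ODE for the mean of $\lambda_1$ under the shifted drift $a_1(u)-b_1\lambda_1-\sigma_1^2A_1(u,s)$; you instead posit the extended affine ansatz $H(t,x)=(B_0+B_1x)u(t,x)$ and read off the ODEs $\del_tB_1=b_1B_1$, $\del_tB_0=-B_1(a_1-\sigma_1^2A_1)$ after cancelling the $u$-equation. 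The two routes give identical answers --- your $B_1(t,s)=e^{-b_1(s-t)}$ and $B_0(t,s)=\int_t^se^{-b_1(s-u)}[a_1(u)-\sigma_1^2A_1(u,s)]\dd u$ are exactly the paper's $\Ebt_t[\lambda_1(s)]$ --- and indeed both recover the $e^{-b_1(s-u)}$ kernel inside the $u$-integral, which appears to have been dropped in the displayed formula for $G$ in the proposition statement. The measure-change argument is more probabilistic and generalises cleanly to the sub-population case (where the paper reuses it), while your PDE ansatz is more elementary and avoids verifying that $\Zt$ is a true martingale; you also rightly flag the Fubini/integrability issue on the infinite horizon caused by the exponential growth of $a_1(t)$, which the paper passes over in silence. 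One small imprecision: the zero-coupon bond price is an expectation under $\Qb$, so the exponent involves $A_0^{\Qb},A_1^{\Qb}$ rather than $A_0,A_1$; since the constant $\theta_1$ only shifts $a_1$ and leaves $b_1$ unchanged, $A_1^{\Qb}=A_1$ and your conclusion $\sigma_L(t)=-A_1(t,t+T)\sigma_1$ stands, but the justification should pass through the risk-neutral dynamics as the paper does.
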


To derive the optimal solution, we define a new equivalent probability measure $\Pbt$. The complete proof is given in Appendix \ref{ap:singleOU}.

\begin{proposition}
Suppose that $\lambda_1(t)$ follows the CIR process \eqref{eq:singleCIR}. Then, we have
\eqstar{
\Eb_t\left[e^{-\int_t^s\lambda_1(u)du}\right]=e^{A_0(t,s)-A_1(t,s)\lambda_1(t)},
}
where 
\eqstar{
A_1(t,s)& = \frac{2(e^{\eta(s-t) }-1)}{(b_1 +\eta)(e^{\eta(s-t)}-1)+2\eta},\qquad \eta=\sqrt{b_1 ^2 +2\sigma_1^2},\nonumber\\
A_0(t,s)& = -\int_t^s a_1(u)A_1(u,s)\dd u.
}
Suppose $\theta_1(t,\lambda_1):=\theta_1\sqrt{\lambda_1(t)}.$ Then, the solution to the single-population optimisation problem \eqref{eq:problem} is given as
\eqstar{
\frac{\beta^*(t)}{Y(t)}= \frac{1}{G(t,\lambda_1)}, && \frac{\alpha_S^*(t)}{Y(t)}= \frac{\theta_S}{\sigma_S},&& \frac{\alpha_L^*(t)}{Y(t)} = \frac{\theta_1\sqrt{\lambda_1(t)}}{\sigma_L(t)}+\frac{\sigma_1\sqrt{\lambda_1(t)}}{\sigma_L(t)}\frac{G_{\lambda_1}(t,\lambda_1)}{G(t,\lambda_1)},
}
where
\eqstar{
\sigma_L(t) = &-A_1(t,t+T)\sigma_1\sqrt{\lambda_1(t)},\\
G(t,\lambda_1) = &\ \phi\lambda_1(t)\int_t^\infty e^{-(b_1+r)(s-t)+A_0(t,s)-A_1(t,s)\lambda_1(t)-\sigma_1^2\int_t^s A_1(u,s)\dd u}\dd s\\
&+\phi\int_t^\infty \int_t^s a_1(u)e^{-b_1(s-u)-\sigma_1^2\int_u^s A_1(v,s)\dd v}\dd u \ e^{-r(s-t)+A_0(t,s)-A_1(t,s)\lambda_1(t)}\dd s\\
&+\int_t^\infty e^{-r(s-t)+A_0(t,s)-A_1(t,s)\lambda_1(t)} \dd s.
}
\label{prop:singleCIR}
\end{proposition}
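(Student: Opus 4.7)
My plan is to follow the same two-stage strategy as for Proposition~\ref{prop:singleOU}: first establish the exponential-affine identity for the survival probability under the CIR dynamics, and then specialise the general formulas of Proposition~\ref{prop:single} by making $G(t,\lambda_1)$ and $\sigma_L(t)$ explicit. Since the abstract form of $(\beta^*,\alpha_S^*,\alpha_L^*)$ is already provided by Proposition~\ref{prop:single}, the real work is to compute two objects under the CIR dynamics \eqref{eq:singleCIR}: the bond-pricing Laplace transform of the integrated mortality, and the auxiliary function $G$.

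For the affine identity, I fix $s$ and posit the ansatz $V(u,\lambda)=e^{A_0(u,s)-A_1(u,s)\lambda}$. Applying the Feynman-Kac theorem to $\Eb_t[e^{-\int_t^s\lambda_1(u)\dd u}]$ yields the backward PDE $\partial_u V+(a_1(u)-b_1\lambda)\partial_\lambda V+\tfrac12\sigma_1^2\lambda\,\partial_{\lambda\lambda}V-\lambda V=0$ with $V(s,\lambda)=1$. Matching powers of $\lambda$ produces the scalar Riccati equation $\partial_u A_1=b_1A_1+\tfrac12\sigma_1^2A_1^2-1$ with $A_1(s,s)=0$, whose standard closed-form solution in terms of $\eta=\sqrt{b_1^2+2\sigma_1^2}$ is the stated expression, together with $\partial_u A_0=a_1(u)A_1(u,s)$, integrating to $A_0(t,s)=-\int_t^s a_1(u)A_1(u,s)\dd u$. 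Since the rolling longevity bond satisfies $L(t)\propto e^{A_0(t,t+T)-A_1(t,t+T)\lambda_1(t)}$ up to deterministic discounting, a direct Itô computation on $\ln L(t)$ produces $\sigma_L(t)=-A_1(t,t+T)\sigma_1\sqrt{\lambda_1(t)}$.

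The hardest step is evaluating $G(t,\lambda_1)=\Eb_t\bigl[\int_t^\infty(\phi\lambda_1(s)+1)e^{-\int_t^s(r+\lambda_1(u))\dd u}\dd s\bigr]$. The constant summand reduces via the affine identity to $\int_t^\infty e^{-r(s-t)+A_0(t,s)-A_1(t,s)\lambda_1(t)}\dd s$, which is the third line of $G$. For the summand containing $\lambda_1(s)$ I would, for each fixed $s$, introduce the equivalent measure $\Pbt^{(s)}$ defined by
\eqstar{
\frac{\dd\Pbt^{(s)}}{\dd\Pb}\bigg|_{\Fc(u)}=\frac{e^{-\int_t^u\lambda_1(v)\dd v+A_0(u,s)-A_1(u,s)\lambda_1(u)}}{e^{A_0(t,s)-A_1(t,s)\lambda_1(t)}},\quad u\in[t,s],
}
which the Riccati system forces to be a $\Pb$-martingale. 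Girsanov's theorem then identifies $\dd\Wt_1(u)=\dd W_1(u)+A_1(u,s)\sigma_1\sqrt{\lambda_1(u)}\,\dd u$ as a $\Pbt^{(s)}$-Brownian motion, so under $\Pbt^{(s)}$ the process $\lambda_1$ satisfies $\dd\lambda_1(u)=[a_1(u)-(b_1+\sigma_1^2A_1(u,s))\lambda_1(u)]\,\dd u+\sigma_1\sqrt{\lambda_1(u)}\,\dd\Wt_1(u)$. Taking $\Pbt^{(s)}$-expectations gives a linear first-order ODE for $m(u):=\Ebt^{(s)}_t[\lambda_1(u)]$ with integrating factor $\exp\{b_1(u-t)+\sigma_1^2\int_t^u A_1(v,s)\dd v\}$; solving and evaluating at $u=s$ delivers
\eqstar{
m(s)=e^{-b_1(s-t)-\sigma_1^2\int_t^s A_1(v,s)\dd v}\lambda_1(t)+\int_t^s a_1(u)e^{-b_1(s-u)-\sigma_1^2\int_u^s A_1(v,s)\dd v}\dd u.
}
Because $\Eb_t[\lambda_1(s)e^{-\int_t^s\lambda_1(u)\dd u}]=e^{A_0(t,s)-A_1(t,s)\lambda_1(t)}\,m(s)$ by the abstract Bayes rule, multiplying by $\phi e^{-r(s-t)}$ and integrating over $s\in[t,\infty)$ reproduces exactly the first two lines of $G(t,\lambda_1)$ in the proposition. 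The final step is purely algebraic: substitute $G$, $G_{\lambda_1}$ and $\sigma_L$ into the $n=1$ specialisation of Proposition~\ref{prop:single}, at which point the $\sqrt{\lambda_1}$ appearing in $\theta_1(t,\lambda_1)=\theta_1\sqrt{\lambda_1(t)}$ cancels the $\sqrt{\lambda_1}$ in $\sigma_L$ to yield the displayed formula for $\alpha_L^*/Y$. The main obstacle throughout is the twisted expectation $\Eb_t[\lambda_1(s)e^{-\int_t^s\lambda_1(u)\dd u}]$, which the Girsanov-plus-linear-ODE construction handles cleanly.
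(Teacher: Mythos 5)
Your proposal is correct and follows essentially the same route as the paper, which omits this proof on the grounds that it mirrors the argument for Proposition~\ref{prop:singleOU}: Feynman--Kac plus an exponential-affine ansatz yielding the Riccati equation for $A_1$, then the change of measure $\dd\Pbt/\dd\Pb\propto e^{-\int\lambda_1}h(\cdot,s)$ to evaluate $\Eb_t[\lambda_1(s)e^{-\int_t^s\lambda_1(u)\dd u}]$ via a linear ODE for $\Ebt_t[\lambda_1(u)]$. You also correctly identify the one place where the CIR case genuinely differs from the OU case, namely that the Girsanov kernel $\sigma_1\sqrt{\lambda_1}A_1(u,s)$ shifts the mean-reversion rate to $b_1+\sigma_1^2A_1(u,s)$, producing the $\sigma_1^2\int A_1$ terms in the integrating factor and hence in the stated form of $G$.
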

As the proof of above result is similar to the proof of Proposition \ref{prop:singleOU}, we omit it here to avoid repetition.

\vspace{0.5pc}

We learn from Proposition \ref{prop:singleOU} and \ref{prop:singleCIR} that the optimal portfolio weight on stock under OU and CIR processes are the same and keep constant over time. However, it is difficult to see from the solution how the longevity bond investment changes over time, because $\frac{\alpha^*_L(t)}{Y(t)}$ depends on $G(t,\lambda_1)$ and $G_{\lambda_1}(t,\lambda_1)$. Later in Section \ref{sec:numerical}, we perform numerical simulations to investigate the optimal investment strategy and benefit withdrawal rate dynamically.

\subsection{Sub-population model} 
Based on different mortality indices and maturity times, there may be different longevity bonds in the market. It may be interesting to study our problem in a market setting with multiple longevity bonds. However, we only consider one longevity bond in our framework as our main focus is on longevity risk hedging. In this section, we study the case where the longevity bond's reference population is different from the pension members' population. More specifically, we set $n=2$ and denote the index population of the longevity bond by Population 1, and use Population 2 to denote the pension members' population. In practice, the reference population of longevity bond tends to be large and could be much larger than pension schemes' reference population. Therefore, we assume that the pension members are a sub-population of the index population. Under this sub-population assumption (see, \cite{wong2014time}), vectors $\mathscr{A}(t,\lambda)$ and $\Sigma^\prime(t,\lambda)$ in \eqref{eq:lambdas} take the following form:
\eqlnostar{eq:double}{
&\mathscr{A}(t,\lambda)=\left[\begin{array}{ccc}
    a_1(t)+b_1\lambda_1(t)\\
    a_2(t)+b_{21}\lambda_1(t)+b_{22}\lambda_2(t)
    \end{array} \right],& &\Sigma^\prime (t,\lambda)= \left[\begin{array}{ccccc}
    \sigma_1(t,\lambda_1)  &  0 &0  \\
    \sigma_{21}(t,\lambda_1)  & \sigma_{22}(t,\lambda_2) & 0 
    \end{array} \right],&
}
where $b_1$, $b_{21}$ and $b_{22}$ are constant numbers.
\eqstar{
a_i(t) = b_i\left(\nu_i+\frac{1}{\Del_i}\left( 1+\frac{1}{b_i \Del_i}\right)e^{\frac{t-m_i}{\Del_i}}\right), \quad \text{for}\  i = 1,2,
}
where $\nu_i$ are $\Del_i$ constants. The constants $m_1$ and $m_2$ are modal values of life expectancy of Population 1 and Population 2, respectively.

Under OU and CIR processes, the continuous functions $\sigma_1(t,\lambda_1)$, $\sigma_{21}(t,\lambda_1)$ and $ \sigma_{22}(t,\lambda_2)$ in vector $\Sigma^\prime(t,\lambda)$ are given as:
\begin{itemize}
\item{ OU process: }
\eqlnostar{eq:doubleOU}{
\Sigma^\prime (t,\lambda)&= \left[\begin{array}{ccc}
    \sigma_1  &  0 &0  \\
    \sigma_{21}  & \sigma_{22} & 0 
    \end{array} \right].
}
\item{ CIR process:}
\eqlnostar{eq:doubleCIR}{
\Sigma^\prime (t,\lambda)&= \left[\begin{array}{ccc}
    \sigma_1\sqrt{\lambda_1(t)}  &  0 &0  \\
    \sigma_{21}\sqrt{\lambda_1(t)}  & \sigma_{22}\sqrt{\lambda_2(t)} & 0 
    \end{array} \right].
}
\end{itemize}
In this sub-population model, there are two state variables. Moreover, the state variable $\lambda_2(t)$ correlates with the state variable $\lambda_1(t)$. This increases the difficulty of solving the optimisation problem. In this case, an analytical solution may not always be available. In the following, we provide Proposition \ref{prop:doubleOU} and \ref{prop:doubleCIR} for the solutions to the sub-population optimisation problem under OU and CIR settings, respectively.

\begin{proposition}
Suppose Population 2 is a sub-population of Population 1, and the force of mortalities follow OU processes: the dynamics of $\lambda_1(t)$ and $\lambda_2(t)$ are described as
\eqstar{
\dd \lambda_1(t) =& \Big(a_1(t)+b_1\lambda_1(t)\Big)\dd t+\sigma_1\dd W_1(t),\\
\dd \lambda_2(t) =& \Big(a_2(t)+b_{21}\lambda_1(t)+b_{22}\lambda_2(t)\Big)\dd t+\sigma_{21}\dd W_1(t)+\sigma_{22}\dd W_2(t).
}
Then, we have
\eqstar{
&\Eb_t\left[e^{-\int_t^s\lambda(u)du}\right]=e^{A_0(t,s)-A_1(t,s)\lambda_1(t)},\nonumber\\
&\Eb_t\left[e^{-\int_t^s\lambda_2(u)du}\right]=e^{C_0(t,s)-C_1(t,s)\lambda_1(t)-C_2(t,s)\lambda_2(t)},
}
where functions $A_0(t,s)$ and $A_1(t,s)$ are the same as in Proposition \ref{prop:singleOU}. The functions $C_0(t,s)$, $C_1(t,s)$ and $C_2(t,s)$ are
\eqstar{
C_2(t,s) =& \frac{1}{b_{22}}\left(1-e^{-b_{22}(s-t)}\right),\nonumber\\
C_1(t,s) =& \frac{b_{21}}{b_1(b_1-b_{22})}\left(1-e^{-b_1(s-t)}\right)+\frac{b_{21}}{b_{22}(b_1-b_{22})} ,\nonumber\\
C_0(t,s) =& -\int_t^s\Big[a_1(u)C_1(u,s)+a_2(u)C_2(u,s)-\frac{1}{2}\sigma_1^2 C_1^2(u,s) \\
&-\frac{1}{2}(\sigma_{21}^2+\sigma_{22})^2 C_2^2(u,s)-\sigma_1\sigma_{21}C_1(u,s)C_2(u,s) \Big]\dd u.}
Suppose $\theta_1(t,\lambda_1):=\theta_1.$ Then, the solution to the sub-population optimisation problem is given as
\eqstar{
\frac{\beta^*(t)}{Y(t)}= &\frac{1}{G(t,\lambda_1,\lambda_2)},\nonumber\\
\frac{\alpha_S^*(t)}{Y(t)}= &\frac{\theta_S}{\sigma_S} ,\nonumber\\
\frac{\alpha_L^*(t)}{Y(t)} = &\frac{\theta_1}{\sigma_L(t)}+\frac{\sigma_1}{\sigma_L(t)}\frac{G_{\lambda_1}(t,\lambda_1,\lambda_2)}{G(t,\lambda_1,\lambda_2)} +\frac{\sigma_{21}}{\sigma_L(t)}\frac{G_{\lambda_1}(t,\lambda_1,\lambda_2)}{G(t,\lambda_1,\lambda_2)},
}
where
\eqstar{
\sigma_L(t) = &-A_1(t,t+T)\sigma_1,\\
G(t,\lambda_1,\lambda_2) =& \ \phi \frac{b_{21}}{b_1-b_{22}}\lambda_1(t) \int_t^\infty \left(e^{-b_1(s-t)}-e^{-b_{22}(s-t)}\right)f(t,s)\dd s\\
&+\phi\lambda_2(t)\int_t^\infty e^{-b_{22}(s-t)}f(t,s)\dd s-\phi \int_t^\infty \Gamma_1(t,s)f(t,s) \dd s\\
& +\phi\frac{b_{21}}{b_1-b_{22}}\int_t^\infty \Gamma_2(t,s)f(t,s)\dd s+\int_t^\infty f(t,s) \dd s,\\
f(t,s) = & e^{-r(s-t)+C_0(t,s)-C_1(t,s)\lambda_1(t)-C_2(t,s)\lambda_2(t)},\\
\Gamma_1(t,s) = & \int_t^s e^{-b_{22}(s-u)}\Big[\frac{b_{21}}{b_1-b_{22}}a_1(u)-a_2(u)+\sigma_1\sigma_{21}C_1(u,s)\\
&+\left(\sigma_{21}^2+\sigma_{22}^2-\frac{b_{21}\sigma_1\sigma_{21}}{b_1-b_{22}}\right)C_2(u,s)\Big]\dd u,\\
\Gamma_2(t,s) = & \int_t^s e^{-b_1(s-u)}\left[a_1(u)-\sigma_1^2 C_1(u,s)-\sigma_1\sigma_{21} C_2(u,s)\right]\dd u.
}
\label{prop:doubleOU}
\end{proposition}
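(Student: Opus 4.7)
The plan is to split the argument into two parts: first, establish the exponential-affine formulas for $\Eb_t[e^{-\int_t^s\lambda_1(u)\dd u}]$ and $\Eb_t[e^{-\int_t^s\lambda_2(u)\dd u}]$ via Feynman--Kac; second, specialise Proposition \ref{prop:single} with $j=2$ and $n=2$ to obtain the optimal controls and an explicit $G$.

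For the first part, note that $\lambda_1$ evolves autonomously, so $\Eb_t[e^{-\int_t^s\lambda_1(u)\dd u}]$ reduces to the one-dimensional computation already carried out in Proposition \ref{prop:singleOU}; the functions $A_0, A_1$ carry over verbatim. For the genuinely two-dimensional expectation $u(t,\lambda_1,\lambda_2) := \Eb_t[e^{-\int_t^s\lambda_2(u)\dd u}]$, Feynman--Kac gives the PDE $\d_t u + \Lc u - \lambda_2 u = 0$ with terminal condition $u(s,\cdot,\cdot)=1$, where $\Lc$ is the generator of $(\lambda_1,\lambda_2)$. Substituting the affine ansatz $u=\exp(C_0(t,s)-C_1(t,s)\lambda_1-C_2(t,s)\lambda_2)$ and matching coefficients of $\lambda_1$, $\lambda_2$ and $1$ separately produces a triangular system of first-order linear ODEs in $t$ with zero terminal data: $C_2$ decouples and is solved immediately; $C_1$ is then driven by $C_2$ through the cross-drift $b_{21}\lambda_1$ in $\dd\lambda_2$; and $C_0$ is obtained by straightforward integration of the remaining drift and diffusion contributions (including the cross-volatility term $\sigma_1\sigma_{21}C_1 C_2$). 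Integrating these in closed form yields the stated $C_2, C_1, C_0$.

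For the second part, Proposition \ref{prop:single} directly delivers $\beta^*/Y = 1/G$, $\alpha_S^*/Y = \theta_S/\sigma_S$, and the stated expression for $\alpha_L^*/Y$ in terms of $G$ and its gradient. What remains is to evaluate $G(t,\lambda_1,\lambda_2)=\Eb_t[\int_t^\infty (\phi\lambda_2(s)+1) e^{-\int_t^s(r+\lambda_2(u))\dd u}\dd s]$ in closed form. The $+1$ contribution gives $\int_t^\infty f(t,s)\dd s$ directly from the affine formula of the first part. For the $\phi\lambda_2(s)$ piece I would exploit the identity $\lambda_2(s) e^{-\int_t^s\lambda_2(u)\dd u} = -\d_s e^{-\int_t^s\lambda_2(u)\dd u}$; by Fubini and dominated convergence this turns $\phi\,\Eb_t[\lambda_2(s)e^{-\int_t^s\lambda_2(u)\dd u}]$ into $-\phi\,\d_s\bigl[e^{C_0(t,s)-C_1(t,s)\lambda_1(t)-C_2(t,s)\lambda_2(t)}\bigr]$. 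Differentiating the exponent in $s$ and inserting the explicit $C_1, C_2$ from the first part produces the exponential kernels $e^{-b_1(s-t)}$ and $e^{-b_{22}(s-t)}$ multiplying $\lambda_1(t)$ and $\lambda_2(t)$, while $\d_s C_0$ supplies, after regrouping, the deterministic integrals $\Gamma_1$ and $\Gamma_2$.

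The hardest step is the bookkeeping in the second part: the cross-coupling through $b_{21}$ and the cross-volatility $\sigma_1\sigma_{21}$ forces $C_0$ to contain a $\sigma_1\sigma_{21}C_1 C_2$ term whose $s$-derivative must be split and rearranged into the two kernels $\Gamma_1, \Gamma_2$ appearing in the statement; the factor $\tfrac{b_{21}}{b_1-b_{22}}$ emerges only after combining partial fractions in $\d_s C_1$. Once this regrouping is complete, the match with the stated $G$ is immediate, and verification that this $G$ satisfies the HJB equation arising in the proof of Proposition \ref{prop:single} follows directly from the ODE system satisfied by $(C_0, C_1, C_2)$, requiring no further computation.
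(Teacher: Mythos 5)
Your proposal is correct, and the first half (Feynman--Kac plus the exponential-affine ansatz, yielding the triangular ODE system for $C_2, C_1, C_0$, and the observation that $\lambda_1$ is autonomous so $A_0, A_1$ carry over) is exactly what the paper does. Where you genuinely diverge is in evaluating the $\phi\lambda_2(s)$ contribution to $G$. The paper changes measure: it shows that $D(t)f(t,s)/f(0,s)$ is an exponential martingale, defines $\Pbt$ through the Radon--Nikodym density $\Zt(s)$ with kernel $\tilde\theta(u,s)=(\sigma_1 C_1+\sigma_{21}C_2,\ \sigma_{22}C_2,\ 0)^\prime$, writes $\Eb_t[\lambda_2(s)e^{-\int_t^s\lambda_2(u)\dd u}]=\Ebt_t[\lambda_2(s)]\,f(t,s)$, and then solves the coupled linear ODEs for $\Ebt_t[\lambda_1(u)]$ and $\Ebt_t[\lambda_2(u)]$; the terms $\Gamma_1,\Gamma_2$ are the particular-solution integrals of that system. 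You instead use $\lambda_2(s)e^{-\int_t^s\lambda_2(u)\dd u}=-\partial_s e^{-\int_t^s\lambda_2(u)\dd u}$ and differentiate the already-established affine formula in its maturity variable, so that $\Ebt_t[\lambda_2(s)]$ is replaced by $-\partial_s\bigl(C_0(t,s)-C_1(t,s)\lambda_1(t)-C_2(t,s)\lambda_2(t)\bigr)$. The two are identical quantities, so your route must reproduce $\Gamma_1,\Gamma_2$ after the regrouping you describe ($\partial_s C_2=e^{-b_{22}(s-t)}$ gives the $\lambda_2(t)$ kernel immediately, $\partial_s C_1$ the $\lambda_1(t)$ kernel, and $\partial_s C_0=-\int_t^s\partial_s g(u,s)\dd u$ the deterministic integrals, using $g(s,s)=0$). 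What your approach buys is the elimination of Girsanov's theorem and of the two auxiliary mean ODEs — everything is read off from one differentiation of the bond-price formula — at the cost of having to justify the interchange of $\partial_s$ with $\Eb_t$ (which you flag via dominated convergence, and which the paper implicitly assumes anyway in asserting its exponential martingale is a true martingale). One small omission: you do not say where $\sigma_L(t)=-A_1(t,t+T)\sigma_1$ comes from, but since the bond references the autonomous Population 1 this is inherited verbatim from the single-population argument.
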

\begin{proof}
The proof is given in Appendix \ref{ap:doubleOU}
\end{proof}

\begin{proposition}
Suppose Population 2 is a sub-population of Population 1, and the force of mortalities follow CIR processes: the dynamics of $\lambda_1(t)$ and $\lambda_2(t)$ are described as
\eqstar{
\dd \lambda_1(t) =& \Big(a_1(t)+b_1\lambda_1(t)\Big)\dd t+\sigma_1\sqrt{\lambda_1(t)}\dd W_1(t),\\
\dd \lambda_2(t) =& \Big(a_2(t)+b_{21}\lambda_1(t)+b_{22}\lambda_2(t)\Big)\dd t+\sigma_{21}\sqrt{\lambda_1(t)}\dd W_1(t)+\sigma_{22}\sqrt{\lambda_2(t)}\dd W_2(t).
}
Then, we have
\eqstar{
&\Eb_t\left[e^{-\int_t^s\lambda(u)du}\right]=e^{A_0(t,s)-A_1(t,s)\lambda_1(t)},\nonumber\\
&\Eb_t\left[e^{-\int_t^s\lambda_2(u)du}\right]=e^{C_0(t,s)-C_1(t,s)\lambda_1(t)-C_2(t,s)\lambda_2(t)},
}
where functions $A_0(t,s)$ and $A_1(t,s)$ are the same as in Proposition \ref{prop:singleCIR}. The functions $C_0(t,s)$, $C_1(t,s)$ and $C_2(t,s)$ are solutions to the following ODE system
\eqlnostar{ODEs}{
0 & = -\frac{C_0}{\del t}+a_1 C_1+a_2 C_2,\nonumber\\
0 & =-\frac{\del C_1}{\del t}+b_1 C_1+b_{21} C_2+\frac{1}{2}\sigma_1^2 C_1^2+\frac{1}{2}\sigma_{21}^2 C_2^2+\sigma_1\sigma_{21} C_1\Act_2,\nonumber\\
0 & = -\frac{\del  C_2}{\del t}+b_{22} C_2+\frac{1}{2}\sigma_{22}^2 C_2^2-1,
}
with terminal conditions $C_0(s,s)=0$, $C_1(s,s)=0$ and $C_2(s,s)=0$.

Suppose $\theta_1(t,\lambda_1):=\theta_1\sqrt{\lambda_1(t)}.$ Then, the solution to the sub-population optimisation problem is given as
\eqstar{
\frac{\beta^*(t)}{Y(t)}= &\frac{1}{G(t,\lambda_1,\lambda_2)},\nonumber\\
\frac{\alpha_S^*(t)}{Y(t)}= &\frac{\theta_S}{\sigma_S} ,\nonumber\\
\frac{\alpha_L^*(t)}{Y(t)} = &\frac{\theta_1\sqrt{\lambda_1(t)}}{\sigma_L(t)}+\frac{\sigma_1\sqrt{\lambda_1(t)}}{\sigma_L(t)}\frac{G_{\lambda_1}(t,\lambda_1,\lambda_2)}{G(t,\lambda_1,\lambda_2)} +\frac{\sigma_{21}\sqrt{\lambda_1(t)}}{\sigma_L(t)}\frac{G_{\lambda_1}(t,\lambda_1,\lambda_2)}{G(t,\lambda_1,\lambda_2)},
}
where
\eqstar{
\sigma_L(t) = &-A_1(t,t+T)\sigma_1\sqrt{\lambda_1(t)}.
}

$G(t,\lambda_1,\lambda_2)$ satisfies
\eqstar{
G(t,\lambda_1,\lambda_2) =&\int_t^\infty e^{-r(s-t)+C_0(t,s)-C_1(t,s)\lambda_1(t)-C_2(t,s)\lambda_2(t)}\dd s +\phi\int_t^\infty e^{-r(s-t)}\Ebt_t\left[\lambda_2(s)\right]\dd s.
}
$\Ebt[\cdot]$ denotes the expectation under an equivalent measure $\Pbt$ which is defined as
\eqstar{
\frac{\dd \Pbt}{\dd \Pb}=\Zt(s)=\exp\left(-\int _0^s \tilde{\theta}^\prime(u,s)\dd W(u)-\frac{1}{2}\int_0^s \left\|\tilde{\theta}(u,s)\right\|^2\dd u\right),
}
where
\eqstar{
\tilde{\theta}(u,s) = &\left[\begin{array}{ccc} 
\sigma_1\sqrt{\lambda_1(u)} C_1(u,s)+\sigma_{21}\sqrt{\lambda_1(u)} C_2(u,s) \\
 \sigma_{22} \sqrt{\lambda_2(u)}C_2(u,s) \\
  0
\end{array}\right].
}
\label{prop:doubleCIR}
\end{proposition}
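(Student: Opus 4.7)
The plan is to mirror the proof of Proposition \ref{prop:doubleOU} (Appendix \ref{ap:doubleOU}), modifying each step to accommodate the square-root CIR diffusions and employing a Girsanov change of measure to cope with the expectation of $\lambda_2(s)$ weighted by the survival factor. First, I would establish the two mortality transform formulas. The expectation $\Eb_t[e^{-\int_t^s \lambda_1(u)\dd u}]$ depends only on $\lambda_1$, which evolves autonomously, so Proposition \ref{prop:singleCIR} applies directly. For the joint expectation, I would postulate the exponential-affine ansatz $e^{C_0(t,s) - C_1(t,s)\lambda_1(t) - C_2(t,s)\lambda_2(t)}$ and invoke Feynman-Kac: the function solves the backward PDE $(\partial_t + \Lc - \lambda_2) v = 0$ on $[0,s]$ with $v(s,\cdot) = 1$, where $\Lc$ is the generator of $(\lambda_1, \lambda_2)$ under \eqref{eq:double} and \eqref{eq:doubleCIR}. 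Substituting the ansatz and matching the coefficients of $\lambda_1$, $\lambda_2$ and the constant term produces the Riccati system \eqref{ODEs} with terminal conditions $C_i(s,s) = 0$.

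Next, I would run the dynamic-programming argument in parallel with Proposition \ref{prop:single}. Adopting the log-utility ansatz $V(t,y,\lambda_1,\lambda_2) = G(t,\lambda_1,\lambda_2)\ln y + H(t,\lambda_1,\lambda_2)$ for the value function of \eqref{eq:problem}, the HJB equation separates into a linear PDE for $G$ and an auxiliary equation for $H$. The logarithm decouples $y$ from the $\lambda$-dependence, so the first-order conditions in $\beta$, $\alpha_S$ and $\alpha_L$ yield the stated expressions for $\beta^*/Y$, $\alpha_S^*/Y$ and $\alpha_L^*/Y$ in terms of $G$ and its gradient in $(\lambda_1,\lambda_2)$. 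Substituting the optimizers back into the $G$-equation gives a linear PDE with source $\phi\lambda_2+1$ and discount $r+\lambda_2$, whence Feynman-Kac delivers $G(t,\lambda_1,\lambda_2) = \Eb_t[\int_t^\infty (\phi\lambda_2(s)+1)\,e^{-\int_t^s(r+\lambda_2(u))\dd u}\dd s]$. The contribution of the constant source $1$ equals $\int_t^\infty e^{-r(s-t)+C_0(t,s)-C_1(t,s)\lambda_1(t)-C_2(t,s)\lambda_2(t)}\dd s$ by the first step.

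To handle the remaining $\phi\lambda_2(s)$ piece I would introduce the advertised change of measure. For each fixed horizon $s$, set $M_u := \exp(-\int_t^u\lambda_2(v)\dd v + C_0(u,s) - C_1(u,s)\lambda_1(u) - C_2(u,s)\lambda_2(u))$ on $[t,s]$. Applying It\^o's formula to $M_u$ and invoking the Riccati system \eqref{ODEs} cancels all $\dd u$-drift terms, revealing $M_u/M_t$ as a stochastic-exponential local martingale whose Girsanov kernel reads off to give exactly the $\tilde\theta(u,s)$ appearing in the statement. Changing measure via $\dd\Pbt/\dd\Pb = \Zt(s)$ then produces $\Eb_t[\lambda_2(s)\,e^{-\int_t^s\lambda_2(u)\dd u}] = e^{C_0(t,s)-C_1(t,s)\lambda_1(t)-C_2(t,s)\lambda_2(t)}\,\Ebt_t[\lambda_2(s)]$, which combined with the first term reproduces the claimed representation of $G$.

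The main technical obstacle is the rigorous justification of this measure change. Because $\tilde\theta(u,s)$ scales like $\sqrt{\lambda_i(u)}$ and the Riccati coefficients $C_1, C_2$ can be sizeable over long horizons, a Novikov or Kazamaki-type integrability check is needed to upgrade $M_u$ from a local to a true martingale, so that $\Zt(s)$ is a bona fide Radon-Nikodym density. This typically requires controlled growth of $C_1, C_2$ together with standard exponential-moment bounds for CIR processes, the latter being delicate because the Feller condition must interact well with the quadratic Riccati drift. A closely related verification, namely a transversality condition on $V$ at infinity, is needed to legitimise the infinite-horizon Feynman-Kac representation. Once these estimates are in place, the explicit forms of $\beta^*$, $\alpha_S^*$ and $\alpha_L^*$ follow directly from the first-order conditions of the second step.
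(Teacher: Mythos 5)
Your proposal follows essentially the same route as the paper: the exponential-affine ansatz with coefficient matching yields the Riccati system \eqref{ODEs}, the log-utility ansatz $V=G\ln y+H$ from Proposition \ref{prop:single} produces the first-order conditions for $\beta^*,\alpha_S^*,\alpha_L^*$, and the Girsanov change of measure with kernel $\tilde\theta(u,s)$ converts $\Eb_t\big[\lambda_2(s)e^{-\int_t^s\lambda_2(u)\dd u}\big]$ into $e^{C_0(t,s)-C_1(t,s)\lambda_1(t)-C_2(t,s)\lambda_2(t)}\,\Ebt_t[\lambda_2(s)]$, exactly as in Appendices \ref{ap:singleOU}--\ref{ap:doubleOU} (the paper itself only remarks that the proof is ``similar to Proposition \ref{prop:doubleOU}''). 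Your added point about verifying a Novikov/Kazamaki condition for the square-root Girsanov kernel is well taken --- the paper passes over it silently --- and your derivation correctly keeps the survival factor multiplying $\Ebt_t[\lambda_2(s)]$, which the proposition's displayed formula for $G$ appears to have dropped.
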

\begin{proof}
The Picard-Lindelöf theorem ensures the existence of unique solutions to ODEs in \eqref{ODEs}. The rest of the proof is similar to the proof of  Proposition \ref{prop:doubleOU} and is omitted here.
\end{proof}

From Proposition \ref{prop:doubleOU} and \ref{prop:doubleCIR}, we see that the optimal portfolio weight on stock stays the same as in the single-population case. The OU case has an analytical solution; however, the CIR case does not. In Section \ref{sec:numerical}, we conduct numerical studies to observe the hedging effect of the longevity bond in the sub-population OU process case.

\section{Numerical applications}
\label{sec:numerical}
This section provides numerical simulations in the single- and sub-population cases using the results from Proposition \ref{prop:singleOU} and \ref{prop:doubleOU}. A numerical study involving CIR models is not presented here as the results are similar to the OU case. We observe the dynamics of the survival probability and look into the impact of the mortality behaviour on the optimal strategy. We investigate the hedging performance of the longevity bond in the pension scheme's risk management and provide a sensitivity analysis on the market price of longevity risk. We are also interested in the effect of the risk-sharing rule between the members and the manager.

Table \ref{table:base} shows the values of parameters in our numerical examples. The time interval $\Del=1/10$ means that we observe the mortality rates 10 times a year. Most of the values of the mortality model parameters are as considered in other works (e.g. \citep{menoncin2017longevity} and \cite{milevsky2001optimal}). The values of other financial market parameters are meant to be representative.
\begin{table}[htbp]
\begin{center}
\small
\caption{Values of parameters for optimization problem.}
{\begin{tabular}{llll}
\toprule
Population 1 & Population 2 & Market & Others\\
\toprule
$\nu_1=0.0009944$ & $\nu_2=0.0009944$ & $r = 0.04$  &  $T=35$ \\
$\Del_1 = 11.4000$  & $\Del_2= 12.9374$& 	$\theta_1=-0.0005$    &  $Y_0=100$\\
$m_1=86.4515$ & $m_2=89.18$ & $\theta_S=0.05$  &  $\Del=1/10$\\
$b_1=0.5610$ & $b_{21}=0.0028$& $\sigma_S=0.15$   &$\phi=0.8$\\
$\sigma_1=0.0035$ & $b_{22}=0.6500$ & $T_L = 20$ &  \\
 & $\sigma_{21}=0.0040$ & &  \\
  & $\sigma_{22}=0.0050$ & &  \\
\hline
\end{tabular}}
\label{table:base}
\end{center}
\end{table}

\subsection{Single-population case}
\label{subsec:numerical1}
In the single-population case, we assume that the scheme members happen to be the reference population of a longevity bond. The manager invests in this longevity bond to hedge the scheme's longevity exposure. We assume that the scheme members have similar mortality behaviour, and their retirement age is 65. The pension scheme's management starts from the retirement time and ends until the last member passes away. According to the Gompertz-Makeham law of mortality, less than $5\%$ of the population could survive until 100 years old given that they are alive at 65 years old. Thus, we conduct numerical simulations with a 35-year time horizon (i.e. $T=35$). 

Figure \ref{fig:sub} shows three simulation paths to illustrate the dynamic mortality behaviour of the members. We see that the survival probability decreases with time. In the bottom-right plot, we observe that for all three paths, less than $5\%$ of the members survive until 100 years old indicating that it is reasonable to set $T=35$. Suppose the Gompertz-Makeham mortality law describes the average trend of the members' mortality. We learn from the plots in Figure \ref{fig:sub} that the survival probability of the simulated path 1 is always higher than average. On the contrary, the survival probability of path 2 is lower than expected. Path 3 does not show any particular trend. Figure \ref{fig:sub} shows the cumulative distribution function $f(\cdot)$ of the stochastic death time $\tau$. We see from the plot that $f(\tau)$ peaks at $\tau=86.5$ approximately. This is consistent with our model settings and choice of parameter values: $m_1=86.4515$ in our model is the modal value of life span of the members. Generally, our observations imply that:
\begin{itemize}
\item path 1 shows a prominent longevity trend;
\item path 2 on average has a lower survival rate;
\item path 3 does not show any particular trend;
\item For all three paths, most of the members' deaths happen before the age of 86.5.
\end{itemize}

\begin{figure}[htbp]
\centering
    \includegraphics[scale=0.55]{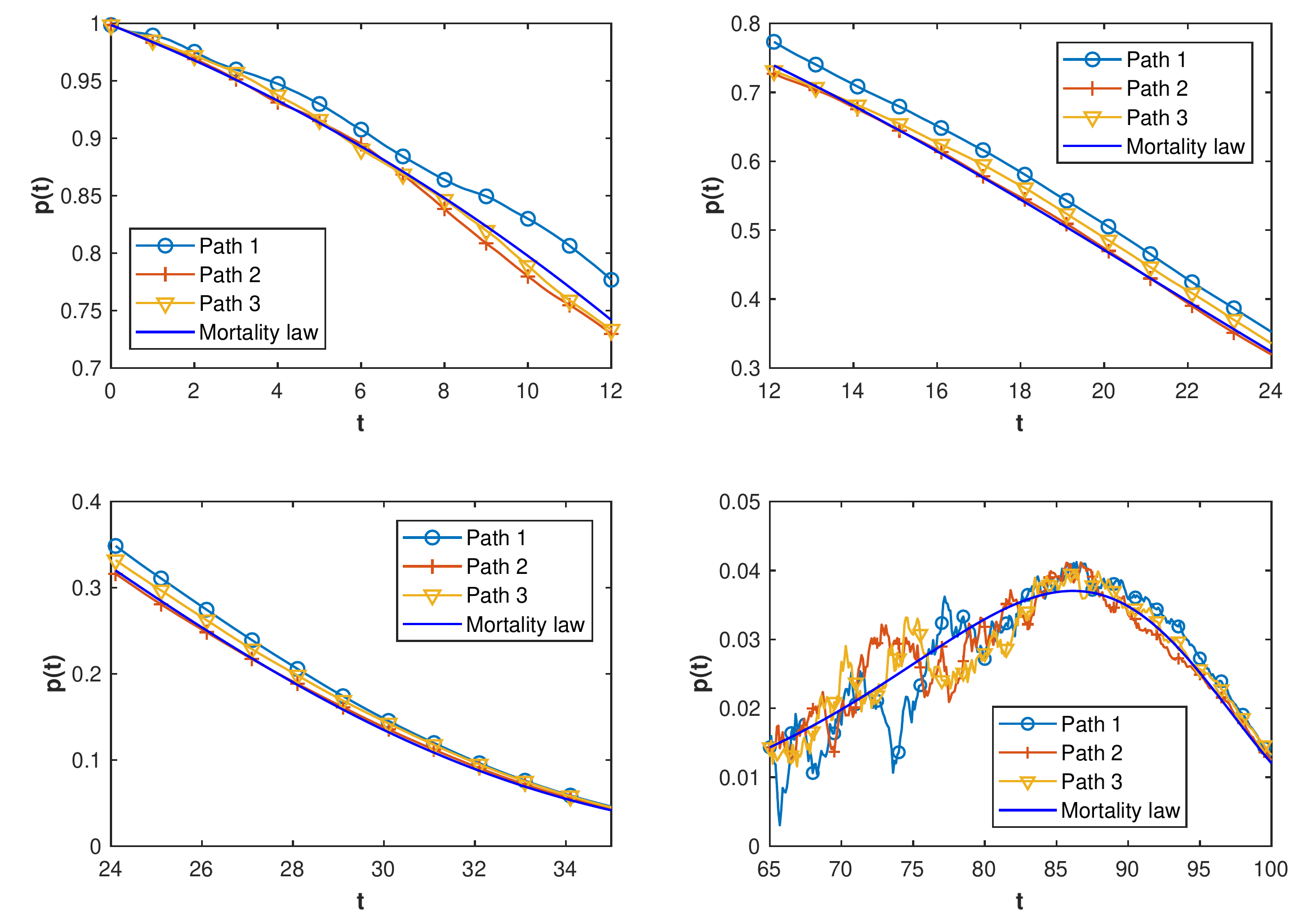}
    \caption{\textit{3 simulation paths for the survival probability and cumulative distribution function of $\tau$}}
    \label{fig:sub}
\end{figure}

\subsubsection{The base scenario}
In the base scenario, we investigate the optimal benefit withdrawal rate and investment strategy. Besides, we observe the scheme's wealth level and manager's compensation dynamically. Figure \ref{fig:strategy} shows the average investment strategy over 100 simulation paths. We observe that the portfolio weight in the longevity bond drops over time. As members get older, the scheme's exposure to the longevity risk reduces, and the need for longevity protection decreases. Accordingly, the manager reduces the portfolio weight in the longevity bond. The flat line which shows the investment proportion in stock is a direct result from the optimal solutions in Proposition \ref{prop:singleOU}: the portfolio weight in stock keeps constant and equals to $\frac{\theta_S}{\sigma_S}$. This constant investment strategy coincides with the result in the classical Merton's portfolio problem. The interpretation is that the constant market price of risk causes no change in the manager's investment behaviour. The proportion of the portfolio in the money market is $\frac{\alpha_0(t)}{Y(t)}=1-\frac{\alpha_L(t)}{Y(t)}-\frac{\alpha_S(t)}{Y(t)}$. We see that the portfolio weight in the money market is negative at first and then increases over time. The negative position in the initial years indicates that the manager borrows money from the money market, and invests in the risky assets to gain risk premiums and increase the scheme's wealth level. As $\frac{\alpha_S(t)}{Y(t)}$ keeps constant, and $\frac{\alpha_L(t)}{Y(t)}$ decreases over time, the manager puts more weight in the money market. With the passage of time, the manager becomes more conservative to avoid unexpected losses. Overall, the longevity bond dominates the investment portfolio throughout the investment horizon. Even when members reach the age of 100, the manager puts around $50\%$ of the portfolio in the longevity bond, indicating that the longevity bond could provide not only longevity protection but also considerable risk premium. 

\begin{figure}[htbp]
\centering   \includegraphics[scale=0.5]{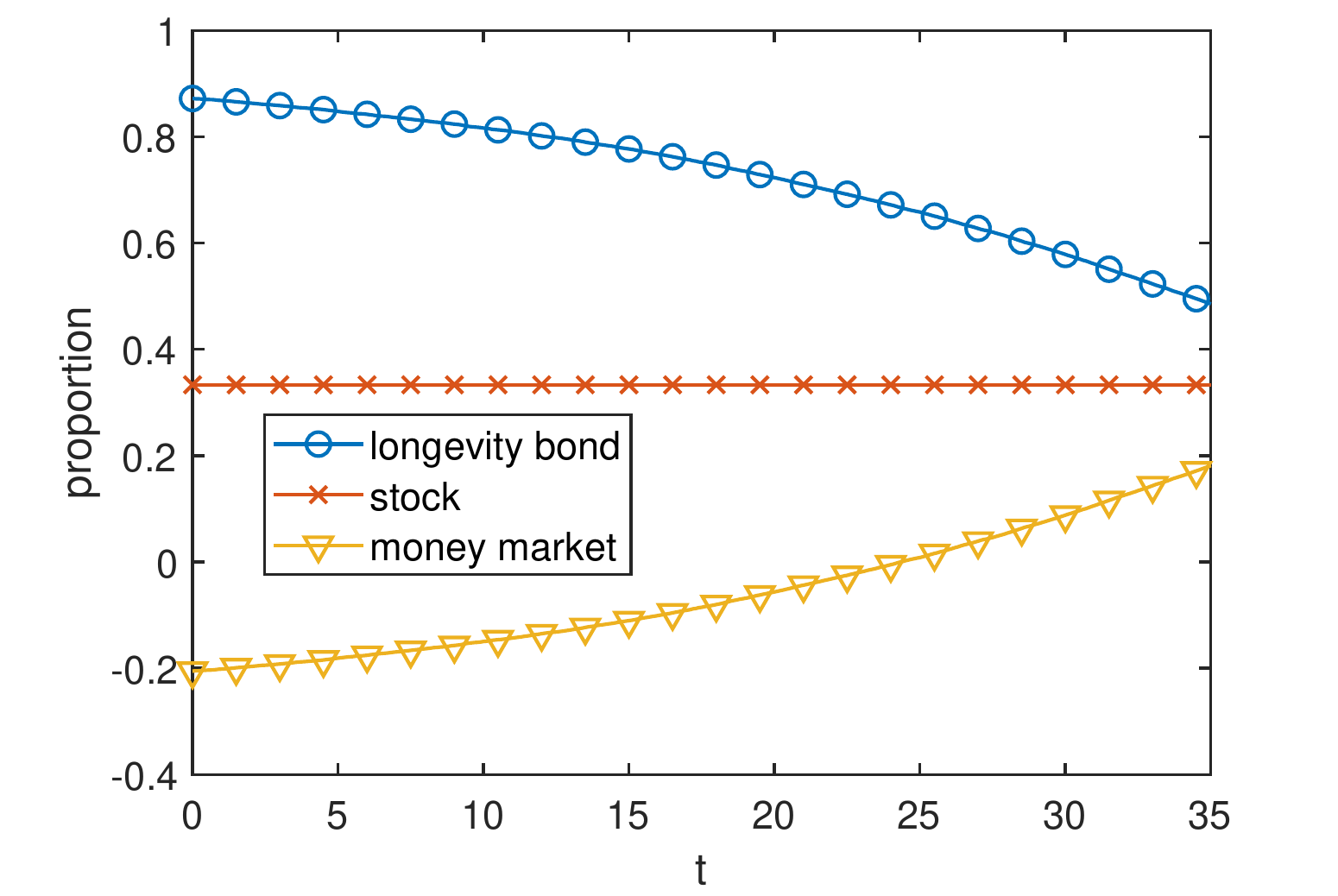}
   \caption{\textit{ average optimal investment strategies over 100 paths}}
    \label{fig:strategy}
    \end{figure}

To observe general results, we show the average of 100 simulation paths of the optimal benefit withdrawal proportion and rate, the scheme's wealth and the manager's compensation. From the top-left plot in Figure \ref{fig:100}, we see the optimal proportion of the scheme wealth $\frac{\beta^*}{Y(t)}$ withdrawn by members increases over time. Meanwhile, in the top-right plot, we observe that the optimal benefit withdrawal rate $\beta^*(t)$ reduces over time. This phenomenon is explained by the scheme's declining wealth process shown in the bottom-right plot. Although the manager invests in the financial market, the average scheme level is decreasing throughout the time horizon due to the continuous benefit withdrawals and compensation payments. The decreasing number of survival members, as well as the declining scheme wealth, result in the drop in benefit withdrawal rate. The average compensation received by the manager shows an interesting trend - it increases at first, peaks at around the 19th year and then drops rapidly. The reason is that most members pass away before the age of 84. The compensation increases before the mode of the members' life span even though the scheme's wealth level keeps decreasing. After reaching the modal value of life expectancy, as most of the members have already passed away, the manger receives less compensation.

\begin{figure}[htbp]
\begin{center}
    \includegraphics[scale=0.43]{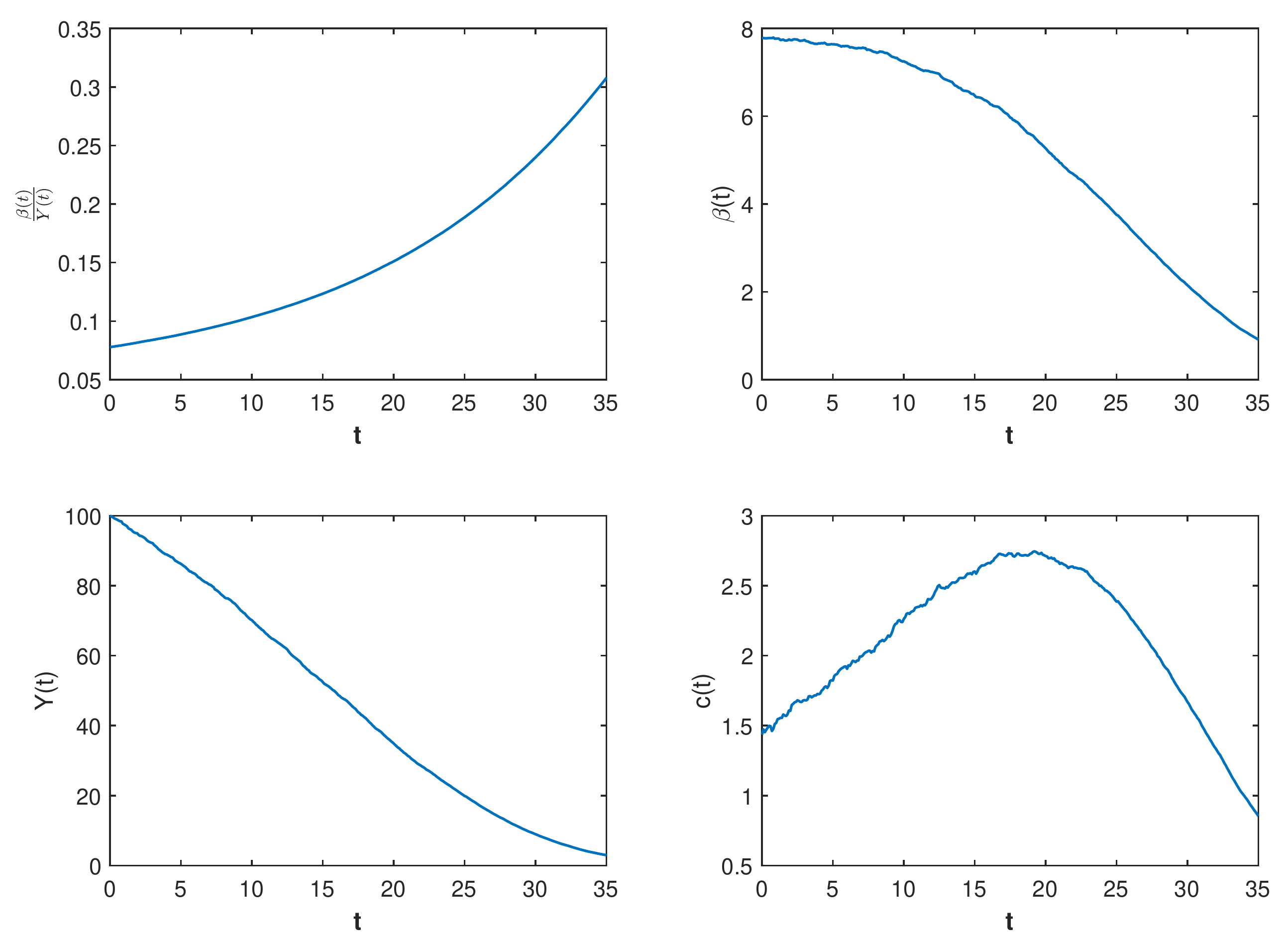}
    \caption{\textit{average benefit withdrawal proportion, withdrawal rate, wealth, and compensation over 100 paths}}
    \label{fig:100}
\end{center}
\end{figure}

\subsubsection{Comparison analysis}
\label{sec:comparison}
To test the hedging performance of longevity bond, we give the results in the case when the manager does not include the longevity bond in the investment portfolio. Without the investment in longevity bond, the optimal benefit withdrawal proportion $\frac{\beta^*(t)}{Y(t)}=\frac{1}{G(t,\lambda_1)},$ and the optimal portfolio weight on stock $\frac{\alpha^*_s(t)}{Y(t)}=\frac{\theta_S}{\sigma_S},$ are the same as given in Proposition \ref{prop:singleOU}. The portfolio weight in money market equals to $1-\frac{\alpha^*_S(t)}{Y(t)}$ and keeps constant over time. Let $\beta_1(t)$ ($c_1(t)$) and $\beta_2(t)$ ($c_1(t)$) denote the benefit withdrawal (compensation) without and with investment in longevity bond, respectively. Figure \ref{fig:benefit2} shows the benefit withdrawal and compensation improvement by investing in longevity bond. It shows that, for path 1, investing in longevity bond always results in higher benefit withdrawals and compensations. For path 3, investing in longevity bond in general increases both members' benefit withdrawals and the manager's compensation. Although, during some short period, longevity bond investment decreases the withdrawals and compensations. However, for path 2, investment in longevity bond seems to cut down both benefit withdrawals and compensations. As discussed earlier (see Figure \ref{fig:sub}), the survival probability on path 2 is overall lower than the Gompertz-Makeham survival probability. It implies that the pension members tend to live shorter than expected. Likewise, Figure \ref{fig:sub} suggests that for path 2, the random death age $\tau$ is more likely to be the younger ages compared to the other paths - path 1 or path 3. As a result, the pension scheme does not face the longevity risk and investing in the longevity bond actually loses money rather than making gains. As it is a global trend that people's average life expectancy is increasing, we argue that the pension schemes will benefit from longevity bond investment as mirrored by the situation in path 1 and 3. To support our claim, we show the average improvements of benefit withdrawal and compensation over 100 simulation paths in Figure \ref{fig:improvement}. As shown, there are low improvements in the first few years, but overall the improvements are significant over most of the investment period. It indicates that investing in longevity bond increases both members' benefit withdrawal rate and manager's compensation, and that it is beneficial to both members and manager. 
\begin{figure}[htbp]
\begin{center}
    \includegraphics[scale=0.47]{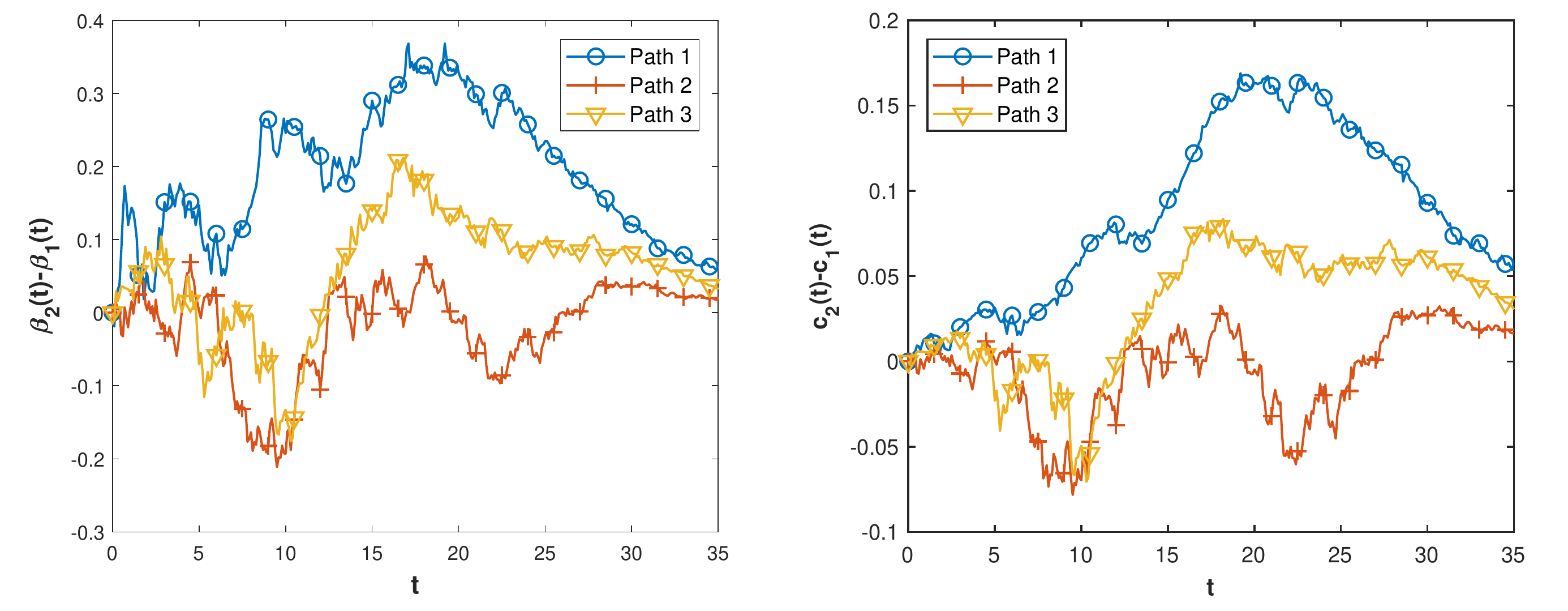}
    \caption{\textit{improvement of the benefit withdrawal and compensation}}
    \label{fig:benefit2}
\end{center}
\end{figure}
\begin{figure}[htbp]
\begin{center}
    \includegraphics[scale=0.5]{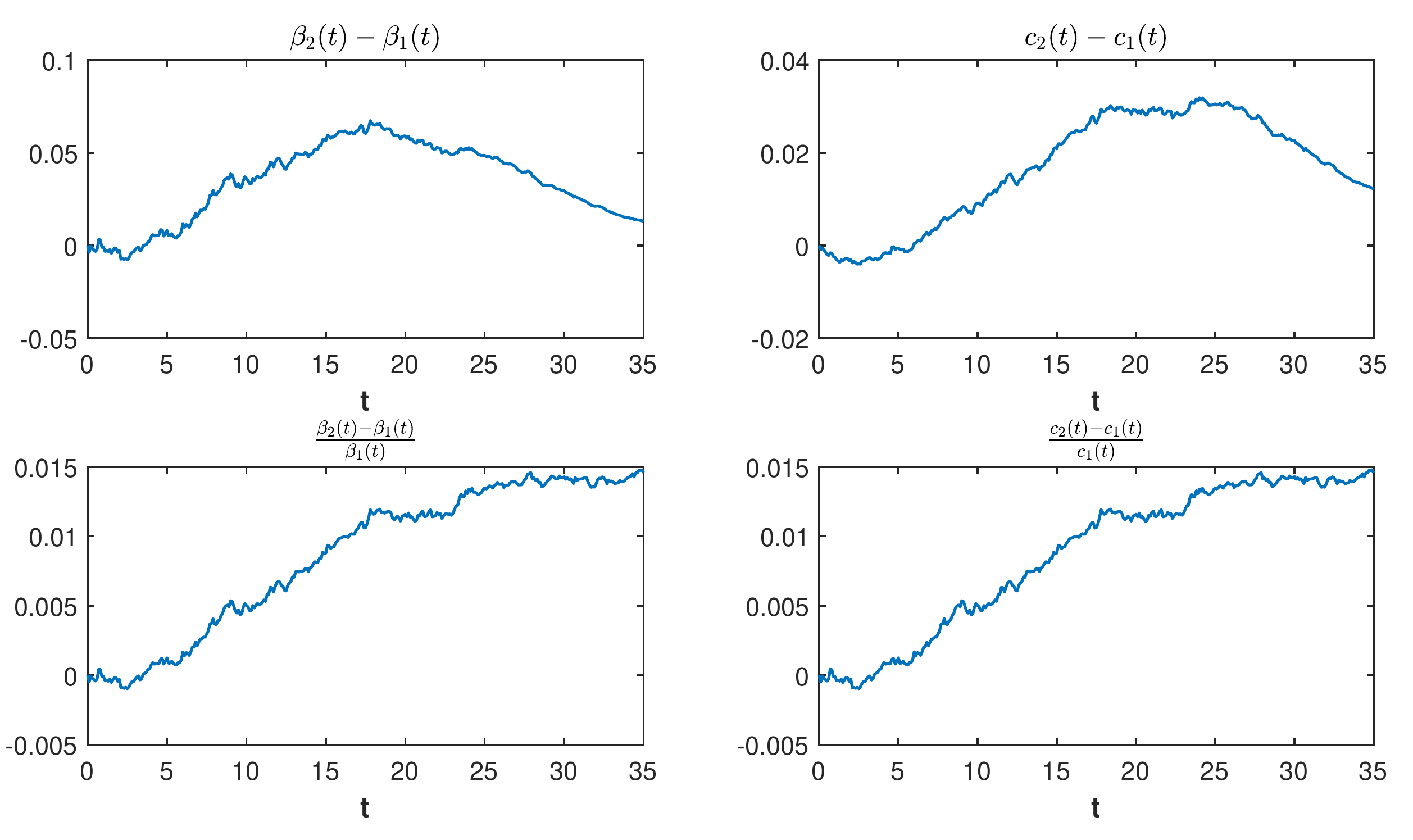}
    \caption{\textit{average improvements of benefit withdrawal and compensation over 100 paths}}
    \label{fig:improvement}
\end{center}
\end{figure}

\subsubsection{Sensitivity analysis}
We are interested in the impact of the market price of risk on pension scheme's risk management. It is difficult to decide the market price of longevity risk (in other words $\theta_1$) due to the absence of longevity bonds in the market. But we examine a few values of $\theta_1$ to give an illustration of the effects of the market price of longevity risk on the optimal strategy, the benefit withdrawal and the manager's compensation. In our base scenario, $\theta_1$ is set as $-5\times 10^{-4}$, and the longevity risk premium offered by the longevity bond is $4.4563\times 10^{-6}$. Compared with the stock's risk premium ($7.5\times 10^{-3}$), the longevity risk premium is minimal. We show the optimal investment strategies in the cases where $\theta_1=0$, $-1.5\times 10^{-3}$ and $-3\times 10^{-3}$. We know that a large absolute value of $\theta_1$ indicates high risk premium offered by the longevity bond. Besides, large premiums should lead to more investment in longevity bond. Figure \ref{fig:theta2} shows the investment strategies with different values of $\theta_1$. We can see that when the manager does not add longevity bond to his portfolio, the optimal investment proportions in stock and money market keep constant over time. The investment in the longevity bond does not affect the portfolio weight in stock while affects the investment in the money market. The top-right plot shows that even in the case when the longevity bond offers no risk premium (i.e. $\theta_1=0$), the proportion invested in the longevity bond is always higher than 40\%. It illustrates that the longevity bond provides a good way to hedge the scheme's longevity risk. As shown in Figure \ref{fig:theta2}, as the longevity risk premium decreases (i.e. lower $\theta_1$), more portfolio weight is put on the longevity bond. In the bottom-right plot, $\theta_1=-0.0030$ and the longevity risk premium equals to $2.6738\times 10^{-5}$ which is again far less than the stock's risk premium. It implies that the manager continuously borrows money from the money market to invest in risky assets throughout the whole time horizon. The intuition is that the longevity bond not only provides longevity risk hedge, but also provides an attractive risk premium. 
\begin{figure}[htbp]
\begin{center}
    \includegraphics[scale=0.53]{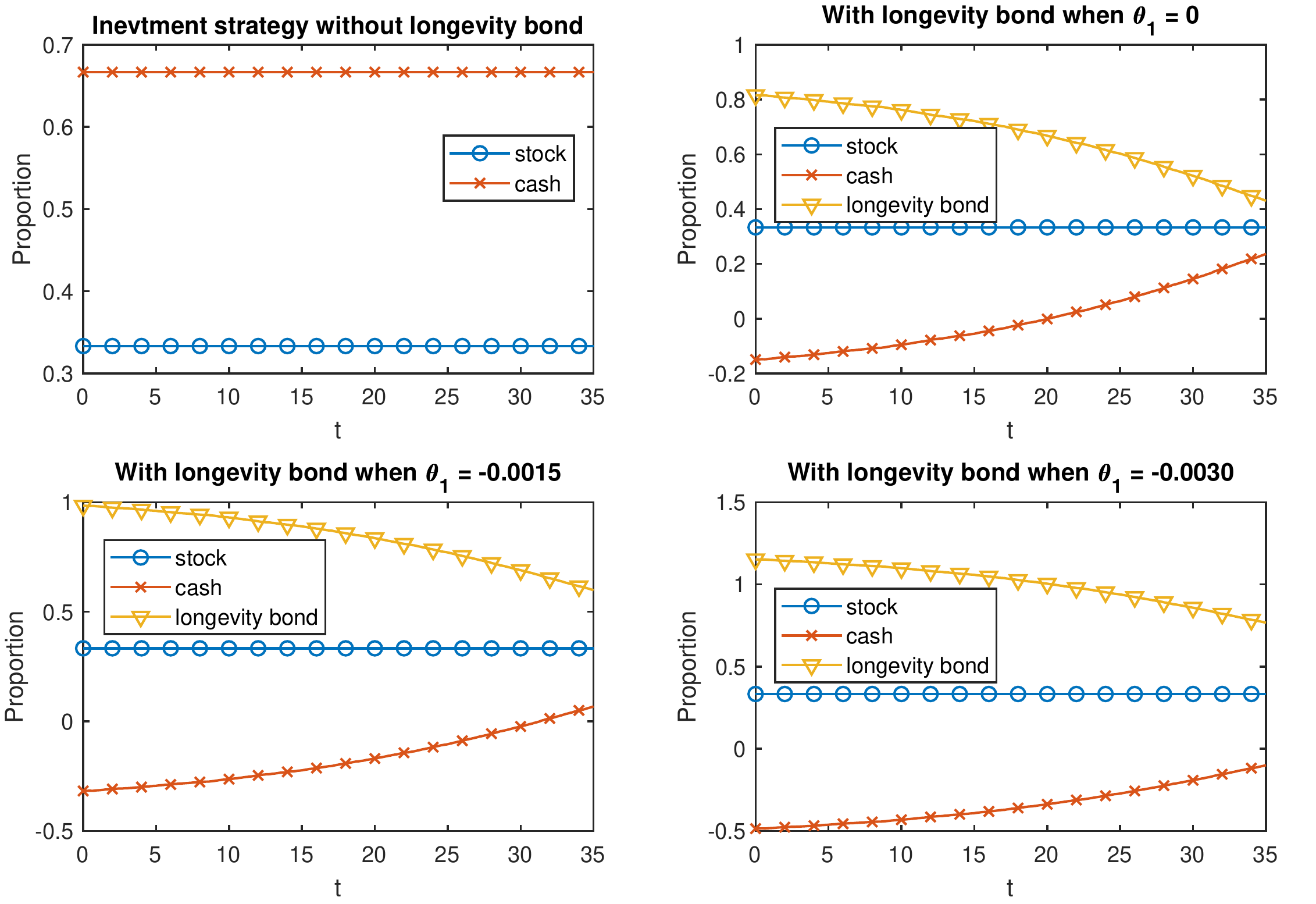}
    \caption{\textit{impact of $\theta_1$ on the optimal investment strategy}}
    \label{fig:theta2}
\end{center}
\end{figure}

Figure \ref{fig:theta} shows the improvements of benefit withdrawal and compensation when investing in longevity bond. As shown for path 1, high market prices of longevity risk lead to high improvements in both manager's compensation and members' benefit withdrawal. As discussed in Section \ref{sec:comparison}, investing in longevity bond sometimes decreases the benefit withdrawal and compensation and thus causes a `loss'. Here by `loss', we mean loss in the members benefit and manager compensation as the improvements by investing in longevity bond are negative. Nevertheless, we observe from the plots in Figure \ref{fig:theta} that a smaller $\theta_1$ reduces this loss.
\begin{figure}[htbp]
\begin{center}
    \includegraphics[scale=0.5]{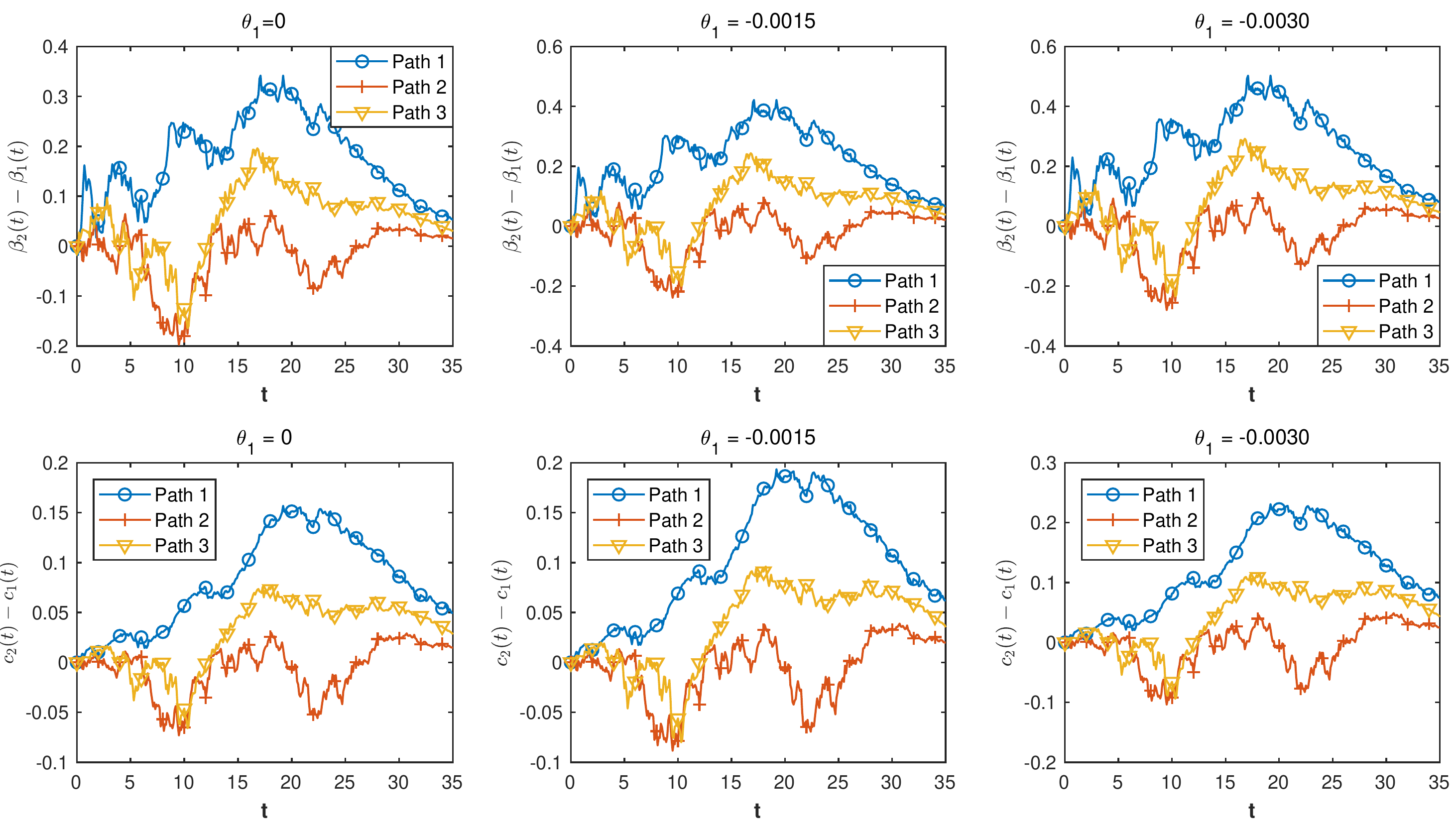}
    \caption{\textit{impact of $\theta_1$ on the benefit withdrawal and compensation improvements}}
    \label{fig:theta}
\end{center}
\end{figure}

We now investigate the impact of the risk-sharing parameter $\phi$ on the benefit withdrawal and manager compensation. As stated before, $\phi$ decides the agreement between the manager and members on how to share the risk. A high value of $\phi$ implies that more weight is put on the manager's utility. When $\phi=0$, the manager works on behalf of the members and only cares about the members' benefit. This case corresponds to no risk-sharing. When $0<\phi<1$, more attention is put on the members' benefit.  In the case where $\phi=1$, the manager treats his own profit and members' benefit equally which corresponds to the case of equal risk-sharing. We test the cases when $\phi$ takes values 0, 0.5 and 1. The case with $\phi=0$ is chosen as the reference case. Figure \ref{fig:phi} shows the improvement rates on benefit withdrawal and compensation (i.e. $\frac{\beta_\phi(t)-\beta_{\phi=0}(t)}{\beta_{\phi=0}(t)}$ and $\frac{c_\phi(t)-c_{\phi=0}(t)}{c_{\phi=0}(t)}$). As shown in the right plot, higher the $\phi,$ higher the improvement in compensation. Compared to the case $\phi=0$, the equal risk-sharing rule agreement improves the manager's compensation by more than $20\%$ at the end of the time horizon. For the benefit withdrawal, a higher $\phi$ leads to higher withdrawals in the last 10 years, but reduces the withdrawals in the early periods. To illustrate the impact of $\phi$, we calculate the average discounted values of benefit withdrawals and compensations over 100 simulation paths. We find that compared to the case $\phi=0$, $\phi=1$ increases the discounted benefit withdrawal by $4.71\%$ and the discounted compensation by $12.82\%$. Thus, both manager and members benefit from sharing the risk equally.
\begin{figure}[htbp]
\begin{center}
    \includegraphics[scale=0.47]{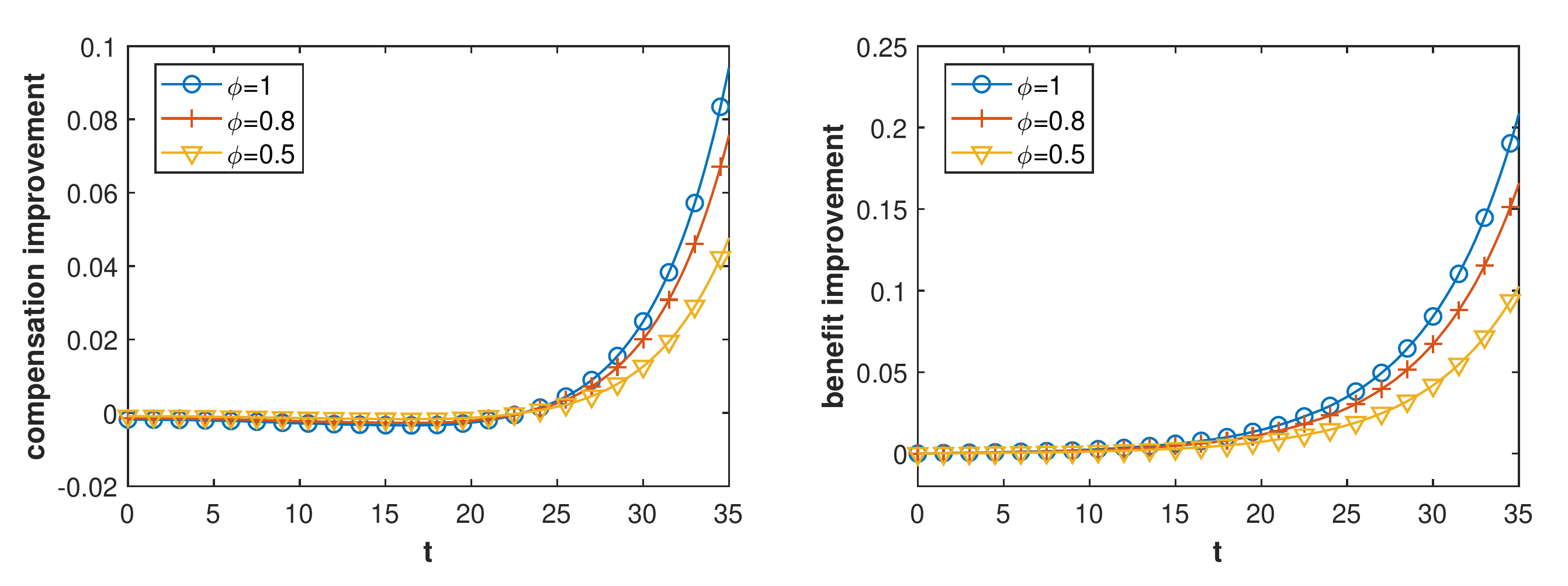}
    \caption{\textit{impact of $\phi$ on the benefit withdrawal and compensation}}
    \label{fig:phi}
\end{center}
\end{figure}

The previous discussions can be summarised as follows. The longevity bond offers an efficient way to hedge the pension scheme's longevity risk as it always dominates the investment portfolio. Moreover, both members and manager benefit from investment in the longevity bond. Higher the longevity risk premium, more portfolio weight is put on the longevity bond, and more the manager and members benefit from investing in it. Finally, an equal risk-sharing rule is the most beneficial to both members and scheme's manager.

\subsection{Sub-population case}
On the one hand, a pension scheme faces the longevity risk caused by its members' more extended life span.  On the other hand, a pension scheme faces the longevity basis risk if the mortality behaviour of the scheme members is imperfectly correlated with the longevity bond mortality index. In practice, pension schemes face longevity basis risk as it is difficult to find a longevity bond in the market that is based exactly on the scheme members. Therefore, the sub-population model may be more practical compared to the single-population model. In this section, we assume that the longevity bond is based on a large population and the scheme members are a sub-population of this large population. Furthermore, we assume that the life expectancy of the scheme members is higher than the longevity bond reference population. If it is lower, the pension scheme may not face the longevity risk and there is no need to invest in the longevity bond. We are interested in the optimal strategy and the hedging performance of longevity bond in the presence of longevity basis risk.

\subsubsection{The base scenario}
The values of parameters used in this section are as given in Table \ref{table:base}. For $i=1,2$, $m_i$ in \eqref{eq:double} is the mode of life expectancy of the $i$th population. $m_2$ is set to be greater than $m_1$ implying that the possible death age of the pension members of Population 2 is higher than the longevity bond reference population  in Population 1. Again, we present three simulation paths in this section to observe the two populations' mortality behaviour. Figure \ref{fig:survival2} shows the survival probabilities for Population 1 and Population 2. As expected, the average survival probability of Population 1 is lower than Population 2. The plots below illustrate the following:
\begin{itemize}
\item Population 1
    \begin{itemize}
    \item path 1 has no particular trend;
    \item path 2 has a higher survival probability than expected;
    \item path 3 shows shorter life expectancy.
    \end{itemize}
\item Population 2: 
    \begin{itemize}
    \item path 1 displays a lower survival probability in the first half of the time horizon;
    \item path 2 $\&$ 3 on average show longevity trend.
    \end{itemize}
\end{itemize}

\begin{figure}[htbp]
\begin{center}
    \includegraphics[scale=0.43]{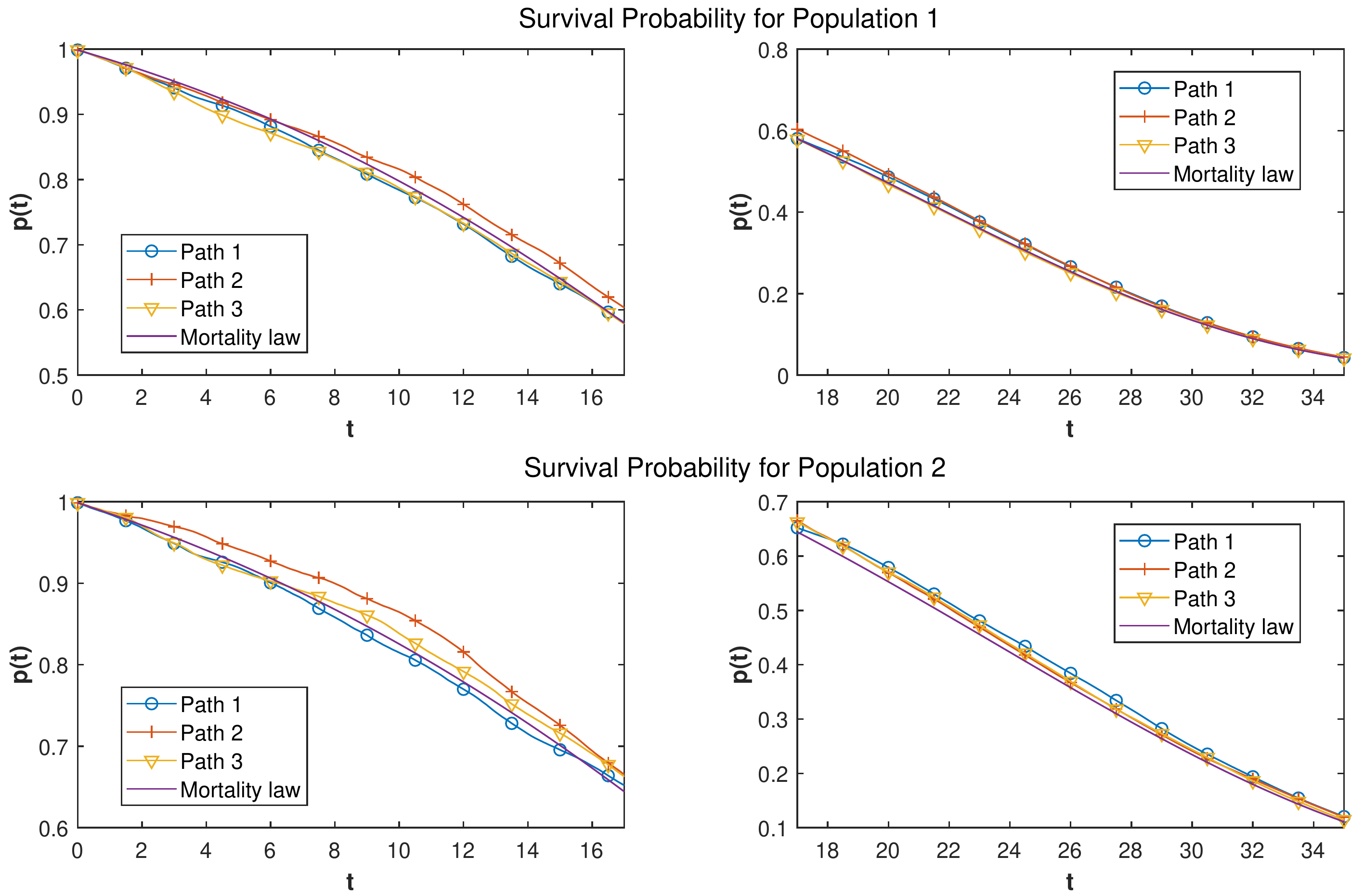}
    \caption{\textit{survival probabilities for Population 1 and 2}}
    \label{fig:survival2}
\end{center}
\end{figure}

Figure \ref{fig:doublestrategy} shows the optimal investment strategy and benefit withdrawal in the sub-population case. The optimal strategy is similar to the single-population case:
\begin{itemize}
\item the longevity bond dominates the portfolio;
\item the portfolio weight on the longevity bond decreases over time while the holding in the money market increases;
\item the optimal investment proportion in the stock keeps constant;
\item the percentage of the funds withdrawn by the members increases with time.
\end{itemize} 
We learn from it that the longevity bond plays an essential role in the pension scheme's risk management. 
\begin{figure}[htbp]
\begin{center}
    \includegraphics[scale=0.42]{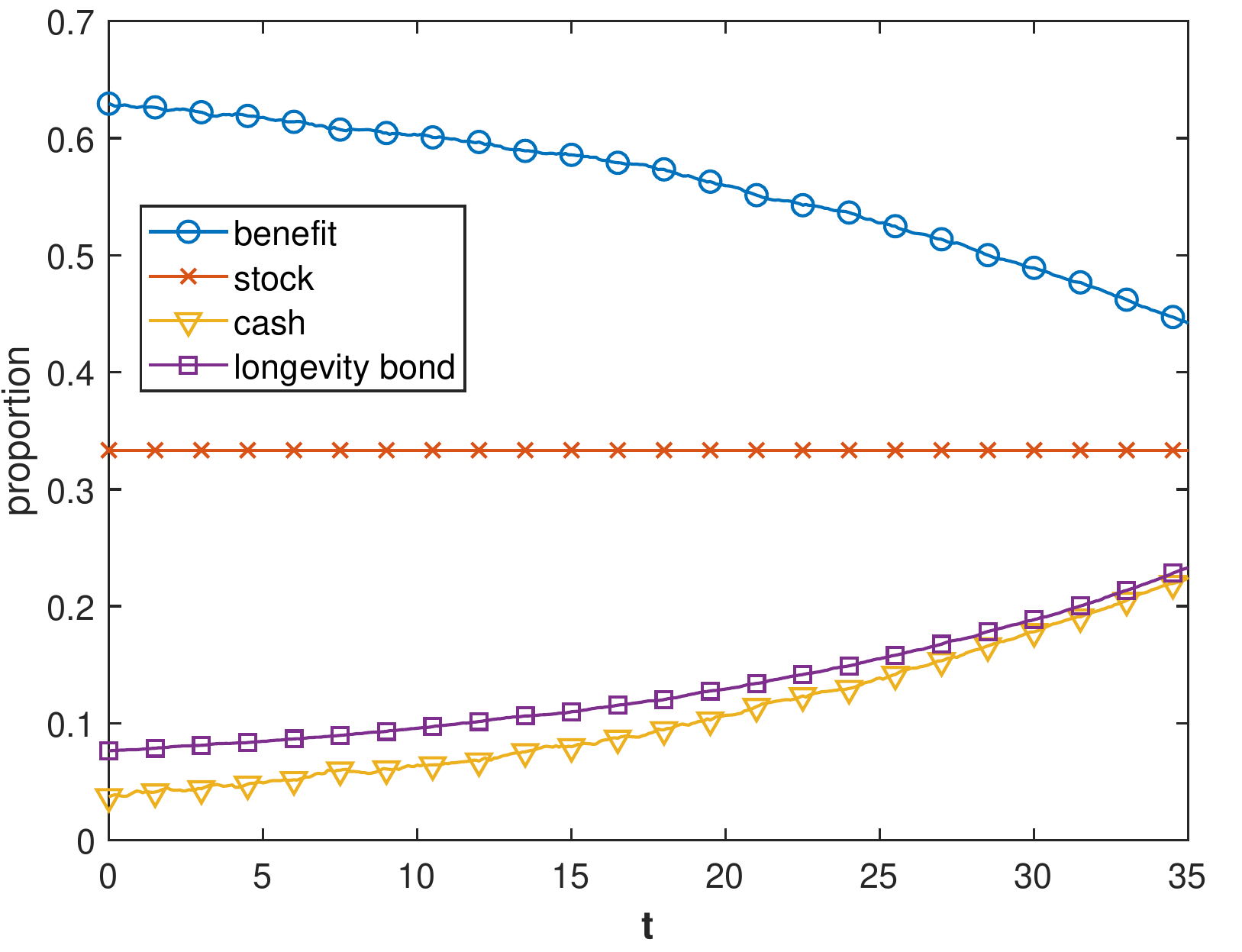}
    \caption{\textit{average optimal portfolio strategy and benefit withdrawal in sub-population case}}
    \label{fig:doublestrategy}
\end{center}
\end{figure}

\subsubsection{Comparison analysis}
In Section \ref{subsec:numerical1}, we show that in the single-population case, both the manager and members benefit from investing in the longevity bond. Now, we conduct a comparison study in the sub-population case to see whether the longevity bond still brings the advantage. For the simulation path 1, Figure \ref{fig:double} shows that investing in the longevity bond in general increases the members' benefit withdrawal and manager's compensation. For path 2, the members and manager benefit from investing in the longevity bond in the late 20-year time period, although the scheme suffers from some loss in early years. While the simulation path 3 shows that neither the manager nor members take advantage of longevity bond investment. To find out the reason for these phenomena, we check the simulated survival probabilities. We learn from the bottom plots in Figure \ref{fig:survival2} that on path 3, scheme members in Population 2 live much longer than anticipated. Whereas, the survival probability of longevity bond reference population (Population 1) on path 3 is lower. In this situation, the scheme suffers from severe longevity risk. , In this case, the longevity bond can not provide an efficient longevity risk hedge because the reference population displays shorter life expectancy. Since we assume the members are a sub-population of the longevity bond reference population, the members should have similar mortality trend with the reference population, although with slightly different behaviour. We believe that investing in the longevity bond can still be beneficial to both sides - manager and scheme members - in the two population case.
\begin{figure}[htbp]
\begin{center}
    \includegraphics[scale=0.48]{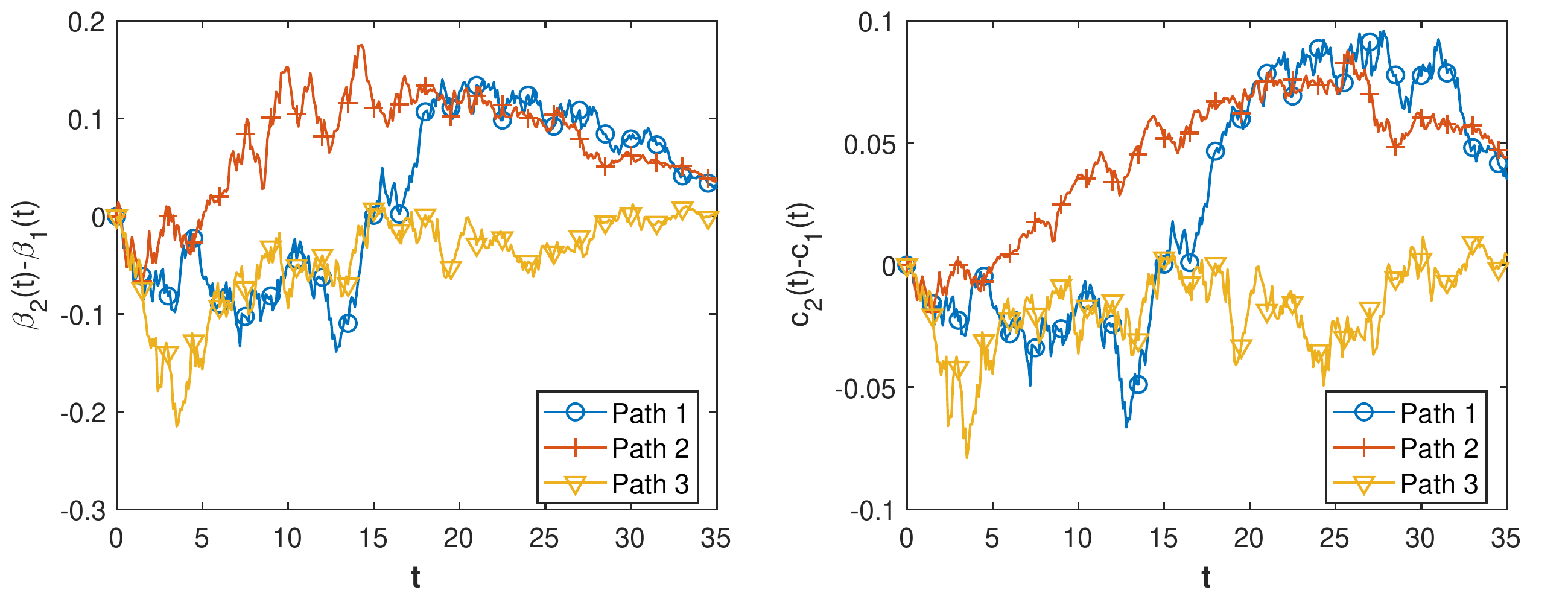}
    \caption{\textit{benefit and compensation improvement in sub-population case}}
    \label{fig:double}
\end{center}
\end{figure}

Overall, we claim that the longevity bond provides a way to hedge the longevity risk in the sub-population case. However, due to the presence of longevity basis risk, the hedging performance may be less effective than in the single-population case. The situation with multiple longevity bonds in the market may show different results and is worthy of an independent future study. The sensitivity analyses on the market price of longevity risk and the risk-sharing rule in the sub-population case deliver results which are similar to the single-population case.

\section{Conclusion}
\label{sec:conclusion}
This work studies the optimal portfolio strategy and benefit withdrawal rate for a pension scheme with an income-drawdown policy in the decumulation phase. The optimal solutions under single- and sub-population cases are obtained by applying the dynamic programming principle. The numerical study shows that the longevity bond can be used to hedge the longevity risk, and both members and manager benefit from the longevity bond investment. We believe that the development of a longevity market is required to provide a solution to capital markets for longevity risk hedging. Moreover, both members and manager benefit from an agreement on the risk-sharing rule in the long run. The problem with multiple longevity bonds issued on different populations is an interesting topic which we will explore in our future study.

\appendix
\section{Proof of Proposition \ref{prop:single}}
\label{ap:single}
\begin{proof}
The corresponding value function $v(Y,\lambda)$ of the optimisation problem \eqref{eq:problem} is given by
\eqstar{
v(Y,\lambda)={\mathop{\sup} \limits_{\alpha_S , \alpha_L ,\beta }}\ \Eb\Big[\int_0^\infty e^{-\int_0^s (r+\lambda_j(u))\dd u} \Big(\ln(\beta(s))+\phi \lambda_j(s)\ln( Y(s))\Big)\dd s \Big].
}
Applying Dynamic Programming Principle (DPP) to the value function gives
\eqlnostar{DPP}{
v(y,z)\ge\  &\Eb \left[\int_0^t e^{-\int_0^s(r +\lambda_j(u))\dd u}\Big[\ln(\beta(s))+\phi\lambda_j(s)\ln(Y(s))\Big]\dd s\right]+ \Eb\left[e^{-\int_0^t(r +\lambda_i(u))\dd u}v(Y_t^y,\lambda_t^\lambda)\right].
}
Applying It\^{o}'s formula to $e^{-\int_0^t(r +\lambda_i(u))\dd u}v(Y_t^y,\lambda_t^\lambda)$ and substituting it into the DPP \eqref{DPP} leads to
\eqstar{
0\geq &\Eb \left[\int_0^t e^{-\int_0^s(r +\lambda_j(u))\dd u}\Big[\ln(\beta(s))+\phi\lambda_j(s)\ln(Y(s))\Big]\dd s\right]-\Eb\left[\int_0^t(r +\lambda_j(s))e^{-\int_0^s(r +\lambda_j(u))\dd u}v(Y_s^y,\lambda_s^\lambda)\dd s\right]\\
&+\Eb \left[\int_0^te^{-\int_0^s(r +\lambda_j(u))\dd u}\mathcal{L}^{\alpha_S,\alpha_L,\beta}v(Y_s^y,Z_s^z)\dd s\right].
}
Dividing by $t$ and taking $t\rightarrow 0$ leads to 
\eqstar{
0 \ge \phi\lambda_j \ln y-(r+\lambda_j)v(y,\lambda)+\ln (\beta)+\mathcal{L}^{\alpha_S,\alpha_L,\beta}v(y,\lambda),
}
and we obtain the HJB equation
\eqstar{
0 = \phi\lambda_j \ln y-(r+\lambda_j)v(y,\lambda)+{\mathop{\sup} \limits_{\alpha_S , \alpha_L ,\beta }}\ \Big[\ln (\beta)+\mathcal{L}^{\alpha_S,\alpha_L,\beta}v(y,\lambda)\Big],
}
where 
\eqstar{
\mathcal{L}^{\alpha,b}v(y,\lambda)=&V_y\left[ry+\alpha_S\sigma_S\theta_S+\alpha_L\sum_{i,j=1}^{n-1}\sigma_L^{ij}\theta_j-\beta\right]+\mathscr{A}^\prime (t,\lambda)V_\lambda+\frac{1}{2}tr\left(\Sigma^\prime \Sigma V_{\lambda \lambda}\right)\\
&+\frac{1}{2}\left(\alpha_S^2\sigma_S^2+ \alpha_L^2\sum_{j=1}^{n-1}\left(\sum_{i=1}^{n-1}\sigma_L^{ij}\right)^2\right)V_{yy}+ \alpha_L\sum_{k=1}^{n-1}\left(\sum_{j=1}^{n-1}\left(\sum_{i=1}^{n-1}\sigma_L^{ij}\right)\sigma_{kj}\right)V_{y\lambda_k}.
}

Solving the first order conditions on $\beta(t)$, $\alpha_S(t)$ and $\alpha_L(t)$ gives
\eqlnostar{eq:solu1}{
\beta^*(t) &= \frac{1}{V_y},\nonumber\\
\alpha_S^*(t) &= -\frac{\theta_S V_y}{\sigma_S V_{yy}},\nonumber\\
\alpha_L^*(t) &=-\frac{\sum_{i,j=1}^{n-1}\sigma_L^{ij}\theta_j V_y+\sum_{i,j=1}^{n-1}\sigma_L^{ij}\Sigma^\prime V_{y\lambda}}{\sum_{j=1}^{n-1}\left(\sum_{i=1}^{n-1}\sigma_L^{ij}\right)^2 V_{yy}}.}
Substituting $\eqref{eq:solu1}$ into the HJB equation leads to
\eqlnostar{eq:HJB}{
0=&\phi\lambda_j \ln y -(r +\lambda_j)V-\ln\left(V_y\right)+ryV_y-1+\mathscr{A}^\prime V_\lambda+\frac{1}{2}tr\left(\Sigma^\prime \Sigma V_{\lambda \lambda}\right)\nonumber\\
&-\frac{1}{2}\frac{V_y^2}{V_{yy}}\left[\theta_S^2-\frac{\left(\sum_{i,j=1}^{n-1}\sigma_L^{ij}\theta_j\right)^2}{\sum_{j=1}^{n-1}\left(\sum_{i=1}^{n-1}\sigma_L^{ij}\right)^2 }\right]-\frac{1}{2} \frac{\Big[ \sum_{k=1}^{n-1}\left(\sum_{j=1}^{n-1}\left(\sum_{i=1}^{n-1}\sigma_L^{ij}\right)\sigma_{kj}\right)V_{y\lambda_k}\Big]^2}{V_{yy}\sum_{j=1}^{n-1}\left(\sum_{i=1}^{n-1}\sigma_L^{ij}\right)^2}\nonumber\\
&-\frac{V_y}{V_{yy}}\frac{\left(\sum_{i,j=1}^{n-1}\sigma_L^{ij}\theta_j\right)\left(\sum_{k=1}^{n-1}\left(\sum_{j=1}^{n-1}\left(\sum_{i=1}^{n-1}\sigma_L^{ij}\right)\sigma_{kj}\right)V_{y\lambda_k}\right)}{\sum_{j=1}^{n-1}\left(\sum_{i=1}^{n-1}\sigma_L^{ij}\right)^2}.
}
We guess the solution to the PDE \eqref{eq:HJB} is of the following form
\eqstar{
V(y,\lambda)=G(\lambda)\ln y+H(\lambda),
}
with boundary conditions
\eqstar{
&{\lim\limits_{t \to \infty}}v(Y_t^y,Z_t^z)=0,& & {\lim\limits_{t \to \infty}}G(Z_t^z)=0, & &{\lim\limits_{t \to \infty}}H(Z_t^z)=0.
}
The PDE \eqref{eq:HJB} now becomes
\eqstar{
    0=&\phi\lambda_j\ln y -(r+\lambda_j)(G\ln y+H)-\ln G+\ln y+rG-1+\mathscr{A}^\prime (G_\lambda\ln y +H_\lambda)+\frac{1}{2}tr\left(\Sigma^\prime\Sigma\left(G_{\lambda \lambda}\ln y+H_{\lambda \lambda}\right)\right)\\
    &+\frac{1}{2}\left[\theta_S^2-\frac{\left(\sum_{i,j=1}^{n-1}\sigma_L^{ij}\theta_j\right)^2}{\sum_{j=1}^{n-1}\left(\sum_{i=1}^{n-1}\sigma_L^{ij}\right)^2 }\right]G+\frac{1}{2} \frac{\Big[ \sum_{k=1}^{n-1}\left(\sum_{j=1}^{n-1}\left(\sum_{i=1}^{n-1}\sigma_L^{ij}\right)\sigma_{kj}\right)G_{\lambda_k}\Big]^2}{G\sum_{j=1}^{n-1}\left(\sum_{i=1}^{n-1}\sigma_L^{ij}\right)^2}\nonumber\\
&+\frac{\left(\sum_{i,j=1}^{n-1}\sigma_L^{ij}\theta_j\right)\left(\sum_{k=1}^{n-1}\left(\sum_{j=1}^{n-1}\left(\sum_{i=1}^{n-1}\sigma_L^{ij}\right)\sigma_{kj}\right)G_{\lambda_k}\right)}{\sum_{j=1}^{n-1}\left(\sum_{i=1}^{n-1}\sigma_L^{ij}\right)^2}.
}
Separating the $\ln y$ terms and we get two ODEs
\eqstar{
0=&\phi\lambda_j-(r+\lambda_j)G+1+\mathscr{A}^\prime G_\lambda+\frac{1}{2}tr\left(\Sigma^\prime\Sigma G_{\lambda \lambda}\right),\\
0=&-(r+\lambda_j)H-\ln G+rG-1+\mathscr{A}^\prime H_\lambda+\frac{1}{2}tr\left(\Sigma^\prime\Sigma H_{\lambda \lambda}\right)+\frac{1}{2}\left[\theta_S^2-\frac{\left(\sum_{i,j=1}^{n-1}\sigma_L^{ij}\theta_j\right)^2}{\sum_{j=1}^{n-1}\left(\sum_{i=1}^{n-1}\sigma_L^{ij}\right)^2 }\right]G\\
&+\frac{\left(\sum_{i,j=1}^{n-1}\sigma_L^{ij}\theta_j\right)\left(\sum_{k=1}^{n-1}\left(\sum_{j=1}^{n-1}\left(\sum_{i=1}^{n-1}\sigma_L^{ij}\right)\sigma_{kj}\right)G_{\lambda_k}\right)}{\sum_{j=1}^{n-1}\left(\sum_{i=1}^{n-1}\sigma_L^{ij}\right)^2}+\frac{1}{2} \frac{\Big[ \sum_{k=1}^{n-1}\left(\sum_{j=1}^{n-1}\left(\sum_{i=1}^{n-1}\sigma_L^{ij}\right)\sigma_{kj}\right)G_{\lambda_k}\Big]^2}{G\sum_{j=1}^{n-1}\left(\sum_{i=1}^{n-1}\sigma_L^{ij}\right)^2}\nonumber.
}
We only need $G(\lambda)$ to get the optimal solutions so we solve the first ODE for $G(\lambda)$ and get
\eqstar{
G(\lambda)=\Eb\left[\int_0^\infty (\phi\lambda_j(s)+1)e^{-\int_0^s(r+\lambda_j(u))\dd u}\dd s\right].
}
$G(\lambda)$ has a dynamic version as
\eqstar{
G(t,\lambda)=\Eb_t\left[\int_t^\infty (\phi\lambda_j(s)+1)e^{-\int_t^s(r+\lambda_j(u))\dd u}\dd s\right].}
Substituting $G(t,\lambda)$ into \eqref{eq:solu1} gives the optimal solutions in Proposition \ref{prop:single}.
\end{proof}
\section{Proof of Proposition \ref{prop:singleOU}}
\label{ap:singleOU}
\begin{proof}
We first give the calculations for $A_0(t,s)$ and $A_1(t,s)$. Under OU setting \eqref{eq:singleOU}, denote by $h(t,s) = \Eb_t\left[e^{-\int_t^s \lambda(u) \dd u}\right]$ and applying It\^{o}'s formula to $e^{-\int_0^t \lambda(u)\dd u}h(t,s)$ and letting $\dd t$ term equals to $0$, we obtain 
\eqlnostar{eq:h}{
0 = -\lambda h+h_t+h_{\lambda}(a_1-b_1 \lambda)+\frac{1}{2}\sigma_1^2 h_{\lambda \lambda}
}
As the $\lambda(t)$ follows affine class model, we guess $h(t,s)$ has the following form
\eqstar{
h(t,s)=e^{A_0(t,s)-A_1(t,s)\lambda(t)},
}
with terminal conditions $h(s,s)=1$, $A_0(s,s)=0$ and $A_1(s,s)=0$.

Differentiating $h(t,s)$ and plugging into (\ref{eq:h}) leads to two ODEs 
\eqstar{
0 & = \frac{\del A_0}{\del t}-a_1A_1+\frac{1}{2}\sigma_1^2A_1^2,\\
0 & = -\frac{\del A_1}{\del t}+b_1 A_1-1.
}
Solving these ODEs gives $A_0(t,s)$ and $A_1(t,s)$ in Proposition \ref{prop:singleOU}. 

Similarly, denote by $h^\Qb(t,s) = \Eb^\Qb_t\left[e^{-\int_t^s \lambda(u) \dd u}\right]$ and assume $h^\Qb(t,s)=e^{A_0^\Qb(t,s)-A_1^\Qb(t,s)\lambda(t)}$, we obtain that $A_1^\Qb(t,s)=A_1(t,s)$. Moreover, we have $\sigma_L(t) = -A^\Qb_1(t,t+T)\sigma_1=-A_1(t,t+T)\sigma_1$.

Next, we show the derivation of $G(t,\lambda)$. According to Proposition \ref{prop:single}, we have
\eqstar{
G(t,\lambda)=&\phi\int_t^\infty \Eb_t\left[\lambda(s)e^{-\int_t^s \lambda(u)\dd u}\right]e^{-r(s-t)}\dd s+\int_t^\infty \Eb_t\left[e^{-\int_t^s \lambda(u)\dd u}\right]e^{-r(s-t)}\dd s.
}
To solve $\Eb_t\left[\lambda(s)e^{-\int_t^s \lambda(u)\dd u}\right]$, we denote by $\Zt(t)=\frac{\Ebt_t[e^{-\int_0^s \lambda(u)\dd u}]}{\Ebt[e^{-\int_0^s \lambda(u)\dd u}]}$, $m(t,s,\lambda)=e^{A_0(t,s)-A_1(t,s)\lambda(t)}$ and $D(t) = e^{-\int_0^t \lambda(u)\dd u}$ and have
\eqstar{
&\Eb_t\left[\lambda(s)e^{-\int_t^s \lambda(u)\dd u}\right]
=\Eb_t\left[\lambda(s)\frac{e^{-\int_t^s \lambda(u)\dd u}}{h(t,s)}\right]h(t,s)=\Eb_t\left[\lambda(s)\frac{\Zt(s)}{\Zt(t)}\right]h(t,s).
}
Let $\Eb_t\left[\lambda(s)\frac{\Zt(s)}{\Zt(t)}\right]=\Ebt_t[\lambda(s)]$, where $\Ebt[\cdot]$ denotes the expectation under measure $\Pbt$ which is equivalent to $\Pb$ and will be specify later. Applying It\^{o}'s formula to $\Zt(t)$ gives
\eqstar{
\dd \Zt(t) =& \dd D(t)\frac{m(t,s,\lambda)}{m(0,s,\lambda)}+\frac{D(t)}{m(0,s,\lambda)}\dd m(t,s,\lambda)\\
=&-\lambda(t)\Zt(t)\dd t+\frac{D(t)}{m(0,s,\lambda)}\left(m_t+m_{\lambda}(a_1-b_1 \lambda)+\frac{1}{2}\sigma_1^2 m_{\lambda \lambda}\right)\dd t+\frac{D(t)}{m(0,s,\lambda)}m_\lambda\sigma_1\dd W_1(t)\\
=&\sigma_1\frac{m_\lambda(t,s,\lambda)}{m(t,s,\lambda)}\Zt(t)\dd W_1(t)\\
=&-\sigma_1A_1(t,s)\Zt(t)\dd W_1(t)
}
Now, we can define $\Pbt$ through $\Zt(s)$
\eqstar{
\frac{\dd \Pbt}{\dd \Pb}=\Zt(s)=&\exp\left(-\int _0^s \sigma_1 A_1(u,s)\dd W_1(u)-\frac{1}{2}\int_0^s \sigma_1^2A_1^2(u,s)\dd u\right),
}
and we have
\eqstar{
\dd \Wt_1(t) &= \dd W_1(t)+\sigma_1 A_1(t,s)\dd t.
}

Thus, we have
\eqstar{
\dd \lambda(t)& = \left(a_1(t)-b_1\lambda(t)\right)\dd t+\sigma_1\left(\dd \Wt_1(t)-\sigma_1 A_1(t,s)\dd t\right) \\
&= \left(a_1(t)-b_1\lambda(t)-\sigma_1^2A_1(t,s)\right)\dd t+\sigma_1\dd \Wt_1(t),
}
taking expectation under $\Pbt$ leads to
\eqstar{
\frac{\dd \Ebt_t[\lambda(u)]}{\dd u }&=a_1(u)-\sigma_1^2A_1(u,s)-b_1\Ebt_t[\lambda(u)].
}
The solution of the above ODE is then obtained
\eqstar{
\Ebt_t[\lambda(s)]=\lambda(t)e^{-b_1(s-t)}+\int_t^s\left(a_1(u)-\sigma_1^2A_1(u,s)\right)e^{-b_1(s-u)} \dd u.
}
Substituting $\Ebt_t[\lambda(s)]$ gives the function $G(t,\lambda)$ in Proposition \ref{prop:singleOU}.
\end{proof}

\section{Proof of Proposition \ref{prop:doubleOU}}
\label{ap:doubleOU}
\begin{proof}
We first show the calculations for $C_0(t,s)$, $C_1(t,s)$ and $C_2(t,s)$. Denote by $f(t,s) = \Eb_t\left[e^{-\int_t^s \lambda_2(u)\dd u}\right]$ and applying It\^{o}'s formula to $e^{-\int_0^t \lambda_2(u)\dd u}f(t,s)$ and letting $\dd t$ term equals to $0$, we get
\eqlnostar{eq:f}{
0 = -\lambda_2f&+f_t+f_{\lambda_1
}(a_1-b_1\lambda_1)+f_{\lambda_2}(a_2-b_{21}\lambda_1-b_{22}\lambda_2)+\frac{1}{2}f_{\lambda_1 \lambda_1}\sigma_1^2\lambda_1\\ \nonumber
&+\frac{1}{2}f_{\lambda_2 \lambda_2}(\sigma_{21}^2\lambda_1+\sigma_{22}^2 \lambda_2)+f_{\lambda_1 \lambda_2}\sigma_1\sigma_{21}\lambda_1,
}
As $\lambda_1(t)$ and $\lambda_2(t)$ follow affine class models, we guess $f(t,s)$ has the following form
\eqstar{
f(t,s)=e^{C_0(t,s)-C_1(t,s)\lambda_1(t)-C_2(t,s)\lambda_2(t)},
}
with terminal conditions $f(s,s)=1$, $C_0(s,s)=0$, $C_1(s,s)=0$ and $C_2(s,s)=0$.

Differentiating $f(t,s)$ and plugging into (\ref{eq:f}) leads to three ODEs 
\eqstar{
0 & = -\frac{\del C_0}{\del t}-a_1C_1-a_2C_2+\frac{1}{2}\sigma_1^2C_1^2+\frac{1}{2}\left(\sigma_{21}^2+\sigma_{22}^2\right)C_2^2+\sigma_1\sigma_{21}C_1C_2,\\
0 & =-\frac{\del C_1}{\del t}+b_1C_1+b_{21}C_2,\\
0 & = -\frac{\del C_2}{\del t}+b_{22}C_2.
}
Solving these ODEs leads to $C_0(t,s)$, $C_1(t,s)$ and $C_2(t,s)$ in Proposition \ref{prop:doubleOU}.

Similar with Proposition \ref{prop:doubleOU}, to obtain $G(t,\lambda_1,\lambda_2)$, we first solve $\Ebt_t[\lambda_1(s)]$ and $\Ebt_t[\lambda_2(s)]$ under $\Pbt$ which is defined as
\eqstar{
\frac{\dd \Pbt}{\dd \Pb}=\Zt(s)=\exp\left(-\int _0^s \tilde{\theta}^\prime(u,s)\dd W(u)-\frac{1}{2}\int_0^s \left\|\tilde{\theta}(u,s)\right\|^2\dd u\right),
}
where
\eqstar{
\tilde{\theta}(u,s) = &\left[\begin{array}{ccc} 
\sigma_1 C_1(u,s)+\sigma_{21} C_2(u,s) \\
 \sigma_{22} C_2(u,s) \\
  0
\end{array}\right],\\
\dd &\Wt(u)=\dd W(u)+\tilde{\theta}(u,s) \dd u.
}
Then, solve the following two ODEs
\eqstar{
\frac{\dd \Ebt_t[\lambda_1(u)]}{\dd u }&=a_1(u)-\sigma_1^2C_1(u,s)-\sigma_1\sigma_{21}C_2(u,s)-b_1\Ebt_t[\lambda_1(u)],\\
\frac{\dd \Ebt_t[\lambda_2(u)]}{\dd u }&=a_2(u)-\sigma_1\sigma_{21}C_1(u,s)-(\sigma_{21}^2+\sigma_{22}^2)C_2(u,s) -b_{21}\Ebt_t[\lambda_1(u)]-b_{22}\Ebt_t[\lambda_2(u)],
}
we obtain
\eqstar{
\Ebt_t[\lambda_1(s)]=&\lambda_1(t)e^{-b_1(s-t)}+\int_t^s\left(a_1(u)-\sigma_1^2C_1(u,s)-\sigma_1\sigma_{21}C_2(u,s)\right)e^{-b_1(s-u)} \dd u,\\
\Ebt_t[\lambda_2(s)]=&\frac{b_{21}}{b_1-b_{22}}\lambda_1(t)\left[e^{-b_1(s-t)}-e^{-b_{22}(s-t)}\right]+\lambda_2(t)e^{-b_{22}(s-t)}\\
&+\frac{b_{21}}{b_1-b_{22}}\int_t^se^{-b_1(s-u)}\left[a_1(u)-\sigma_1^2C_1(u,s)-\sigma_1\sigma_{21}C_2(u,s)\right]\dd u\\
&-\int_t^s e^{-b_{22}(s-u)}\Big[\frac{b_{21}}{b_1-b_{22}}a_1(u)-a_2(u)+\sigma_1\sigma_{21}C_1(u,s)\\
&+\left(\sigma_{21}^2+\sigma_{22}^2-\frac{b_{21}\sigma_1\sigma_{21}}{b_1-b_{22}}\right)C_2(u,s)\Big]\dd u.
}
\end{proof}
\bibliographystyle{abbrvnat}
\bibliography{bibfile}

\begin{thebibliography}{21}
\providecommand{\natexlab}[1]{#1}
\providecommand{\url}[1]{\texttt{#1}}
\expandafter\ifx\csname urlstyle\endcsname\relax
  \providecommand{\doi}[1]{doi: #1}\else
  \providecommand{\doi}{doi: \begingroup \urlstyle{rm}\Url}\fi

\bibitem[Biffis and Blake(2014)]{biffis2014keeping}
E.~Biffis and D.~Blake.
\newblock Keeping some skin in the game: How to start a capital market in
  longevity risk transfers.
\newblock \emph{North American Actuarial Journal}, 18\penalty0 (1):\penalty0
  14--21, 2014.

\bibitem[Blake and Burrows(2001)]{blake2001survivor}
D.~Blake and W.~Burrows.
\newblock Survivor bonds: Helping to hedge mortality risk.
\newblock \emph{Journal of Risk and Insurance}, pages 339--348, 2001.

\bibitem[Blake et~al.(2017)Blake, El~Karoui, MacMinn, and
  Loisel]{blake2017longevity}
D.~P. Blake, N.~El~Karoui, R.~D. MacMinn, and S.~Loisel.
\newblock Longevity risk and capital markets: The 2015-16 update.
\newblock 2017.

\bibitem[Boulier et~al.(2001)Boulier, Huang, and Taillard]{boulier2001optimal}
J.-F. Boulier, S.~Huang, and G.~Taillard.
\newblock Optimal management under stochastic interest rates: the case of a
  protected defined contribution pension fund.
\newblock \emph{Insurance: Mathematics and Economics}, 28\penalty0
  (2):\penalty0 173--189, 2001.

\bibitem[Byrne and Winter(2009)]{byrne2009living}
A.~Byrne and N.~Winter.
\newblock Living with longevity.
\newblock \emph{Pensions: An International Journal}, 14\penalty0 (2):\penalty0
  115--127, 2009.

\bibitem[Cairns(2000)]{cairns2000some}
A.~Cairns.
\newblock Some notes on the dynamics and optimal control of stochastic pension
  fund models in continuous time.
\newblock \emph{ASTIN Bulletin: The Journal of the IAA}, 30\penalty0
  (1):\penalty0 19--55, 2000.

\bibitem[Cocco and Gomes(2012)]{cocco2012longevity}
J.~F. Cocco and F.~J. Gomes.
\newblock Longevity risk, retirement savings, and financial innovation.
\newblock \emph{Journal of Financial Economics}, 103\penalty0 (3):\penalty0
  507--529, 2012.

\bibitem[Coughlan et~al.(2011)Coughlan, Khalaf-Allah, Ye, Kumar, Cairns, Blake,
  and Dowd]{coughlan2011longevity}
G.~D. Coughlan, M.~Khalaf-Allah, Y.~Ye, S.~Kumar, A.~J. Cairns, D.~Blake, and
  K.~Dowd.
\newblock Longevity hedging 101: A framework for longevity basis risk analysis
  and hedge effectiveness.
\newblock \emph{North American Actuarial Journal}, 15\penalty0 (2):\penalty0
  150--176, 2011.

\bibitem[Cvitanic and Zhang(2012)]{cvitanic2012contract}
J.~Cvitanic and J.~Zhang.
\newblock \emph{Contract theory in continuous-time models}.
\newblock Springer Science \& Business Media, 2012.

\bibitem[Devolder et~al.(2012)Devolder, Janssen, and
  Manca]{devolder2012stochastic}
P.~Devolder, J.~Janssen, and R.~Manca.
\newblock \emph{Stochastic methods for pension funds}.
\newblock Wiley Online Library, 2012.

\bibitem[Gao(2008)]{gao2008stochastic}
J.~Gao.
\newblock Stochastic optimal control of dc pension funds.
\newblock \emph{Insurance: Mathematics and Economics}, 42\penalty0
  (3):\penalty0 1159--1164, 2008.

\bibitem[Gerrard et~al.(2004)Gerrard, Haberman, and Vigna]{gerrard2004optimal}
R.~Gerrard, S.~Haberman, and E.~Vigna.
\newblock Optimal investment choices post-retirement in a defined contribution
  pension scheme.
\newblock \emph{Insurance: Mathematics and Economics}, 35\penalty0
  (2):\penalty0 321--342, 2004.

\bibitem[He and Liang(2013{\natexlab{a}})]{he2013optimala}
L.~He and Z.~Liang.
\newblock Optimal dynamic asset allocation strategy for ela scheme of dc
  pension plan during the distribution phase.
\newblock \emph{Insurance: Mathematics and Economics}, 52\penalty0
  (2):\penalty0 404--410, 2013{\natexlab{a}}.

\bibitem[He and Liang(2013{\natexlab{b}})]{he2013optimalb}
L.~He and Z.~Liang.
\newblock Optimal investment strategy for the dc plan with the return of
  premiums clauses in a mean--variance framework.
\newblock \emph{Insurance: Mathematics and Economics}, 53\penalty0
  (3):\penalty0 643--649, 2013{\natexlab{b}}.

\bibitem[He and Liang(2015)]{he2015optimal}
L.~He and Z.~Liang.
\newblock Optimal assets allocation and benefit outgo policies of dc pension
  plan with compulsory conversion claims.
\newblock \emph{Insurance: Mathematics and Economics}, 61:\penalty0 227--234,
  2015.

\bibitem[Luciano and Vigna(2005)]{luciano2005non}
E.~Luciano and E.~Vigna.
\newblock Non mean reverting affine processes for stochastic mortality.
\newblock \emph{ICER Applied Mathematics Working Paper}, 2005.

\bibitem[Menoncin(2008)]{menoncin2008role}
F.~Menoncin.
\newblock The role of longevity bonds in optimal portfolios.
\newblock \emph{Insurance: Mathematics and Economics}, 42\penalty0
  (1):\penalty0 343--358, 2008.

\bibitem[Menoncin and Regis(2017)]{menoncin2017longevity}
F.~Menoncin and L.~Regis.
\newblock Longevity-linked assets and pre-retirement consumption/portfolio
  decisions.
\newblock \emph{Insurance: Mathematics and Economics}, 76:\penalty0 75--86,
  2017.

\bibitem[Milevsky(2001)]{milevsky2001optimal}
M.~A. Milevsky.
\newblock Optimal annuitization policies: Analysis of the options.
\newblock \emph{North American Actuarial Journal}, 5\penalty0 (1):\penalty0
  57--69, 2001.

\bibitem[Wong et~al.(2014)Wong, Chiu, and Wong]{wong2014time}
T.~W. Wong, M.~C. Chiu, and H.~Y. Wong.
\newblock Time-consistent mean--variance hedging of longevity risk: Effect of
  cointegration.
\newblock \emph{Insurance: Mathematics and Economics}, 56:\penalty0 56--67,
  2014.

\bibitem[Wong et~al.(2017)Wong, Chiu, and Wong]{wong2017managing}
T.~W. Wong, M.~C. Chiu, and H.~Y. Wong.
\newblock Managing mortality risk with longevity bonds when mortality rates are
  cointegrated.
\newblock \emph{Journal of Risk and Insurance}, 84\penalty0 (3):\penalty0
  987--1023, 2017.

\end{thebibliography}
\nocite{*}
\end{document}